\documentclass[12pt]{article}

\usepackage{amsmath}
\usepackage{amssymb}
\usepackage{amsfonts}
\usepackage{dsfont} 
\usepackage{tikz}
\usepackage{subcaption}

\usepackage{graphicx}

\usepackage{amsmath,amssymb,amsthm,mathtools}
\usepackage{bm}
\usepackage{enumitem}
\usepackage{hyperref}
\usepackage{ulem}

\newtheorem{theorem}{Theorem}
\newtheorem{proposition}[theorem]{Proposition}
\newtheorem{corollary}[theorem]{Corollary}
\newtheorem{remark}[theorem]{Remark}

\newcommand{\p}{\partial}

\newcommand{\be}{\begin{equation}}
\newcommand{\bel}[1]{\begin{equation}\label{#1}}
\newcommand{\ee}{\end{equation}}
\newcommand{\bea}{\begin{eqnarray}}
\newcommand{\eea}{\end{eqnarray}}
\newcommand{\balign}{\begin{align}}
\newcommand{\ealign}{\end{align}}
\newcommand{\ba}{\begin{array}}
\newcommand{\ea}{\end{array}}
\newcommand{\bfig}{\begin{figure}}
\newcommand{\efig}{\end{figure}}

\newcommand{\eref}[1]{(\ref{#1})}

\newcommand{\exval}[1]{\mbox{$\langle \, {#1}\, \rangle$}}

\newcommand{\rme}{\mathrm{e}}

\newcommand{\R}{{\mathbb R}}

\newcommand{\Z}{{\mathbb Z}}

\newcommand{\T}{{\mathbb T}} 

\newcommand{\balpha}{\boldsymbol{\alpha}}

\newcommand{\brho}{\boldsymbol{\rho}}

\newtheorem{theo}{Theorem}[section]

\newcommand{\Proof}{\noindent {\it Proof: }}
\def\qed{\hfill$\Box$\par\medskip\par\relax}

\begin{document}

\title{Hydrodynamics and boundary-induced phase transitions in the $n$-species
particle-exchange process}

\author{G. M. Sch\"utz$^{1}$ \and A. Zahra$^{1,2}$}
\vspace{10pt}

\maketitle

\noindent $^{1}$ Centro de An\'alise Matem\'atica, Geometria e Sistemas Din\^amicos, Departamento de Matem\'atica, Instituto Superior T'ecnico, Universidade de Lisboa, Av. Rovisco Pais 1, 1049-001 Lisboa, Portugal\\
$^{2}$ Laboratoire de Physique et Chimie Th\'eoriques, Universit\'e de Lorraine, CNRS, F-54000 Nancy, France\\

\begin{abstract}
The $n$-species particle-exchange process (PEP($n$)) is an exclusion process in which particles of $n$ different species exchange positions on neighbouring sites with rates chosen such that the invariant measure on the discrete torus is a product measure. We address the large-scale hydrodynamic behaviour of this process which yields a
system of $n$ coupled inviscid Burgers 
equations. This system of conservation laws is shown to admit Riemann invariants for arbitrary $n$ from which explicit solutions of the Riemann problem in terms of shock waves and rarefaction fans are obtained. 
We also introduce the open PEP($n$), in which particles are exchanged with boundary reservoirs. For a distinguished manifold of boundary rates, we prove that the invariant measure is the same product measure as in the periodic system. The hydrodynamic description in terms of Riemann invariants is used
to derive the stationary phase diagram explicitly in terms of microscopic boundary rates. In the generic case, the steady state exhibits $2n+1$ phases, with boundary-induced phase transitions analogous to those of the single-species asymmetric simple exclusion process.
\end{abstract}

\section{Introduction}

One-dimensional stochastic interacting particle systems with open
boundaries form a paradigmatic class of nonequilibrium models. In the
bulk, the microscopic dynamics conserves particle numbers locally, while
the boundaries exchange particles with external reservoirs. The stationary
state is therefore driven by the reservoirs and, in the presence of a bias,
typically carries a nonzero current. A characteristic feature of such
systems is that the dependence of the stationary bulk state on the
reservoir parameters may be discontinuous, even for short-range dynamics.
These discontinuities have no equilibrium analogue and became known as
boundary-induced phase transitions \cite{Krug91}.
Beyond their role as exactly tractable models, exclusion-type dynamics
also provide minimal descriptions of collective transport phenomena in
biology, molecular motors, and traffic-like systems \cite{Chow24}.

For systems with a single conserved quantity, the hydrodynamic mechanism
behind these transitions is by now well understood. On a microscopic scale
it has its origin in the flow of local perturbations of the stationary distribution
\cite{Popk99,Hage01}. On the Euler scale it can be understood from the
density evolution according to a scalar conservation law whose flux is the
stationary current-density relation of the corresponding conservative
particle system. With a regularization by the vanishing viscosity method
\cite{Chen95} the stationary bulk density selected by two reservoirs can then be
read off by the so-called
extremal-current principle \cite{Popk99}. For the asymmetric simple
exclusion process (ASEP), originally introduced in connection with
ribosome motion \cite{MacD68} and later developed as a basic model of interacting
particle systems \cite{Spit70}, this gives the familiar
low-density, high-density, and maximal-current phases
\cite{Derr93a,Schu93b,Sand94b}. The same scalar setting also provides
exact microscopic descriptions of shocks and their fluctuations
\cite{Derr93b,Ferr94a,Schu23a}.

The situation is much less complete for systems with several conserved
quantities. In multicomponent systems the hydrodynamic modes interact, and
the reservoirs can no longer be treated through a scalar selection
principle. Boundary effects may be transmitted through different
characteristic families, producing a substantially richer phase structure
\cite{Popk04a,Popk11}. Moreover, even the conservative problem is usually
difficult: a generic system of conservation laws does not admit global
Riemann variables, and for generic multispecies exclusion processes the
invariant measure is not known even for periodic boundary conditions.
As a result, the current-density relation, the
Riemann problem, and the connection between microscopic boundary rates and
macroscopic reservoir densities are rarely all explicit.

Recent work has shown that, for boundary-driven multispecies systems, the
hydrodynamic selection problem can be formulated in terms of Riemann
invariants rather than by a scalar extremal-current principle
\cite{Popk04,Cant24,zahra2025hydrodynamics}. However, in most examples the
phase diagram is obtained only in terms of macroscopic reservoir
densities. Relating these densities explicitly to the microscopic
injection, extraction, and exchange rates at the boundaries remains a
separate and often nontrivial step. Thus, explicit open-boundary phase 
diagrams for systems with several
conserved quantities are still available only in special cases, including
two-component driven systems with exact shock measures or steady-state
selection mechanisms and special open multispecies exclusion processes
\cite{Rako04,Brza07,Ayye17,Roy21}.

In this paper we carry out this program for the $n$-species
particle-exchange process (PEP($n$)) introduced in \cite{Schu17}
for general $n$ and in \cite{Toth03} for $n=2$.
This is a
multispecies exclusion process in which neighbouring particles exchange
positions with species-dependent and direction-dependent rates. We focus on
the parameter regime where the bias of an exchange between two species is
generated by species drifts: each species is assigned a real drift parameter,
and the antisymmetric part of the exchange rate is the difference of the
two drifts, Figure~\ref{fig:PEP($n$)-illustration}. In this regime the process with periodic boundary conditions
admits a factorized invariant measure. This product structure makes it
possible to compute the stationary current-density relation explicitly and
to derive a closed Euler-scale system for the $n$ conserved species.

Our first main result is the complete analysis of this hydrodynamic system
for arbitrary $n$. In the nondegenerate case, where the drift parameters
 are pairwise distinct, we construct global Riemann
variables on the physical simplex. In these variables the equations
diagonalize, the characteristic speeds are ordered, and the Riemann problem
can be solved explicitly for arbitrary left and right states. Each
characteristic family gives rise to Burgers-type elementary waves:
rarefaction fans or Lax shocks. We also describe the degenerate case, in
which several drift parameters coincide and the corresponding species
combine into block densities transported together with contact modes.

Our second main result concerns open boundaries. We introduce boundary
exchange dynamics with reservoirs and identify a manifold of boundary
rates that is compatible with the same product measures as in the
periodic system when the two reservoirs coincide. These boundary
conditions are \emph{PDE-friendly} in the sense of \cite{Popk04}: they
provide an explicit map from microscopic boundary rates to the
macroscopic reservoir densities entering the hydrodynamic equations.

Combining this boundary construction with the explicit solution of the
Riemann problem allows us to derive the hydrodynamic phase diagram of the open
PEP($n$). The selected stationary bulk state is obtained by solving
the Riemann step initial conditions with densities equal to the boundary densities of the reservoirs and evaluating the
self-similar solution at the origin. In Riemann variables this selection
takes a particularly simple form and leads to a generic phase structure
with $2n+1$ phases: $n+1$ boundary-induced phases and $n$ mixed phases in
which one mode is selected by the bulk. For $n=1$ the construction reduces
to the usual ASEP phase diagram, while the case $n=2$ gives a transparent
genuinely multicomponent example, which we discuss explicitly.

The significance of the PEP($n$) is therefore twofold. On the hydrodynamic
side, it provides a rare example of a genuinely
multicomponent driven system with an arbitrary number $n$ of conservation laws whose Euler equations admit global Riemann
variables and for which Riemann problem is explicitly solvable. On the
nonequilibrium side, it gives an open multispecies exclusion process for
which the phase diagram of boundary-induced phase transitions
can be expressed directly in terms of
the microscopic boundary parameters. The model thus provides an explicit
bridge between microscopic exchange dynamics, macroscopic wave
propagation, and boundary-induced steady-state selection in systems with several
conserved species.

The paper is organized as follows. In Section~\ref{PEP} we define formally the
PEP($n$) with periodic boundaries and recall its product invariant
measure and stationary current-density relation. In Section~\ref{hydro} we derive
the Euler-scale hydrodynamic equations, construct the Riemann variables,
and solve the Riemann problem. In Section~\ref{open} we introduce open boundaries,
identify the PDE-friendly boundary manifold, and derive the phase diagram
for boundary-induced phase transitions, including the two-species example.

\begin{figure}[h!]
\centering
\begin{tikzpicture}[
  site/.style    ={circle, draw=black, thick, minimum size=8mm, inner sep=0pt, fill=white},
  particle/.style={circle, draw=black, thick, minimum size=8mm, inner sep=0pt, font=\small\bfseries},
  spA/.style={particle, fill=red!70,         text=white},
  spB/.style={particle, fill=blue!65,        text=white},
  spC/.style={particle, fill=green!55!black, text=white},
]

\draw[gray!55, thick] (-0.2,0) -- (10.2,0);
\node at (-0.55,0) {$\cdots$};
\node at (10.55,0) {$\cdots$};

\node[site] at (1,0)  {};
\node[spA]  at (2,0)  {$1$};
\node[spB]  at (3,0)  {$2$};
\node[site] at (4,0)  {};
\node[spC]  at (5,0)  {$3$};
\node[spA]  at (6,0)  {$1$};
\node[site] at (7,0)  {};
\node[spB]  at (8,0)  {$2$};
\node[spC]  at (9,0)  {$3$};
\node[site] at (10,0) {};

\draw[->, thick] (2,0.5) to[bend left=55]
   node[midway, above, font=\small] {$g_{1,2}$}
   (3,0.5);

\draw[->, thick] (6,0.5) to[bend left=55]
   node[midway, above, font=\small] {$g_{1,0}$}
   (7,0.5);

\node[font=\footnotesize, gray!70!black] at (2,-0.7)  {$k$};
\node[font=\footnotesize, gray!70!black] at (3,-0.7)  {$k{+}1$};
\node[font=\footnotesize, gray!70!black] at (6,-0.7)  {$\ell$};
\node[font=\footnotesize, gray!70!black] at (7,-0.7)  {$\ell{+}1$};

\node[
  draw=gray!70, rounded corners, thick,
  fill=gray!8,
  inner sep=6pt,
  align=left,
  font=\small,
  anchor=north,
] at (5,-1.4) {%
  \begin{tabular}{@{}r@{\;}l@{}}
   Exchange rate: & $g_{\alpha,\beta}\;=\;\tfrac{1}{2}\bigl(w_{\alpha,\beta} + f_\alpha - f_\beta\bigr)$, \quad
                    $\alpha,\beta\in\{0,1,\dots,n\}$ \\[2pt]
   Symmetric part:& $w_{\alpha,\beta}=w_{\beta,\alpha}>0$ for $\alpha\neq\beta$, \;
                    $w_{\alpha,\alpha}=0$ \\[2pt]
   Drifts:        & $f_\alpha\in\mathbb{R}$, \;
                    $f_0=0$ (vacancies are passive) \\[2pt]
   Positivity:    & $w_{\alpha,\beta}\;\geq\;|f_\alpha-f_\beta|$ \\
  \end{tabular}
};

\end{tikzpicture}
\caption{The bulk dynamics of the $n$-species particle exchange process, illustrated for $n=3$. The constraints in the box make the invariant measure a product measure.}
\label{fig:PEP($n$)-illustration}
\end{figure}

\section{The $n$-species particle exchange process}
\label{PEP}
\setcounter{equation}{0}
\setcounter{footnote}{0}

Exclusion processes with more than one conserved species of particles
do not in general have a factorized invariant measure even
when interactions are nearest neighbor and boundary conditions 
are periodic, see \cite{Derr93b,Evan98b,Isae01,Ferr13} for some
well-known examples that have attracted considerable attention.
As explored in \cite{Schu17} there is, however, a parameter
manifold of such multispecies exclusion processes which have
an invariant product measure. This family of processes is called
the $n$-species particle exchange process (PEP($n$)), defined
formally below in \eqref{genPEP}.

\subsection{Definition}

We consider the finite lattice with $L$ sites
with periodic boundary conditions, i.e., the one-dimensional
torus $\T_L = \{1,\dots,L\}$. Each site $k\in\T_L$ (counted
modulo $L$) is either empty or occupied by at most one particle
(exclusion principle).
Each particle belongs to a species $\alpha\in\{1,\dots,n\}$.
A particle of species 0 is called vacancy. When $\alpha_{k}=0$ 
we also say that site $k$ is empty. 

A configuration of the PEP is denoted
by the $L$-tuple $\balpha=(\alpha_{1},\alpha_{2},\dots,
\alpha_{L})$ where $\alpha_{k}\in\{0,\dots,n\}$
indicates the presence of a particle of type $\alpha_k$ on site $k$.
The indicator
\begin{equation}
n^{\alpha}_{k} = \delta_{\alpha_{k},\alpha}
\end{equation}
``counts'' the number of particles of type $\alpha$ on site $k$.
The total number of particles of each species 
in the system is given by
\begin{equation}
N_{\alpha} = \sum_{k=1}^{L} n^{\alpha}_{k} .
\label{partnum}
\end{equation}
Notice that by the exclusion principle the state space $\Omega$ 
of the PEP($n$) is the set of
all elements in $\{0,\dots,n\}^L$ subject to the condition that
$\sum_{\alpha=0}^n n^{\alpha}_{k}=1$
for all $k\in\T_L$.

With the swapped configuration
$\balpha^{k,l}$ with occupation variables
\begin{equation}
\alpha^{k,l}_m := \left\{\ba{cc} 
\alpha_{l} & \mbox{ if } m=k \\
\alpha_{k} & \mbox{ if } m=l \\
\alpha_{m} & \mbox{ else } 
\ea\right.
\end{equation}
the local exchange generators $\mathcal{L}^{ex}_{k,k+1}$ defined by
\begin{equation}
(\mathcal{L}^{ex}_{k,k+1} f) (\balpha) = 
\sum_{\alpha=0}^n \sum_{\beta=0}^n 
g_{\alpha, \beta} n^{\alpha}_{k} n^{\beta}_{k+1}
\left[f(\balpha^{k,k+1}) - f(\balpha)\right] 
\label{genPEPloc}
\end{equation}
for measurable functions $f:\Omega\to\R$
encode the exchange of
an exclusion particle of species
$\alpha \in \{0,1,\dots,n\}$ on site $k\in\T_L$ with an 
exclusion particle of species
$\beta$ on site $k+1 \mod{L}$ with rate $g_{\alpha,\beta}$,
i.e., after an exponentially distributed independent random time with
mean $1/g_{\alpha,\beta}$. 
The Markovian dynamics of the PEP with periodic boundary conditions 
is then defined by the sum
\begin{equation}
\mathcal{L}^{per} f = \sum_{k=1}^{L} \mathcal{L}^{ex}_{k,k+1} 
\label{genPEP}
\end{equation}
of local exchange  generators.

It is convenient to decompose the rates into a symmetric part 
$w_{\alpha,\beta}=w_{\beta,\alpha}>0$ for $\alpha\neq\beta$, 
$w_{\alpha,\alpha}=0$, 
and an antisymmetric part $f_{\alpha,\beta}=-f_{\beta,\alpha}$ 
in the form
\bel{rates}
g_{\alpha, \beta} = \frac{1}{2} (w_{\alpha, \beta} + f_{\alpha,\beta}) .
\ee
Strict positivity of the rates implies $w_{\alpha,\beta} \geq |f_{\alpha,\beta}|$ and antisymmetry implies  
$f_{\alpha,\alpha} = 0$. For the exchange rates of particles of 
species
$\alpha\in\{1,\dots,n\}$ with vacant neighbors the short-hand
notations
\bel{Edef}
f_\alpha := f_{\alpha,0}, \quad w_\alpha := w_{\alpha,0}
\ee
is used. Notice that $f_{0}=w_{0}=0$. 

\subsection{Invariant measure}

Due to the exchange dynamics the state space decomposes into 
ergodic components specified by the particle numbers $N_\alpha$.
If the condition
\bel{stat}
f_{\alpha,\beta} = f_\alpha - f_\beta
\ee
holds, then the unique invariant measure in such an ergodic
component is the
uniform distribution \cite{Schu17}.
It follows that with the fugacities $\mu_\alpha\in\R$ the grand canonical product measure
\begin{equation}
\pi = \prod_{k=1}^{L} \left(\sum_{\alpha=0}^n \frac{\rme^{\mu_\alpha} n_k^{\alpha}}{\sum_{\alpha=0}^n \rme^{\mu_\alpha}} \right)
\label{invmeas}
\end{equation}
with fluctuating particle numbers is also stationary. 
Expectations w.r.t. this invariant measure are denoted by
$\exval{\cdot}$. Using standard methods \cite{Ligg99} one can show that
the (infinite) product measure with these marginals is the
unique and extremal translation-invariant stationary measure
of the PEP defined on the infinite integer lattice $\Z$.

For this measure the particle densities $\rho_\alpha := \exval{N_{\alpha}}/L $ are given by
\be
\rho_\alpha
= \frac{\rme^{\mu_\alpha} }{ \sum_{\beta=0}^n \rme^{\mu_{\beta}}}.
\ee
Exclusion implies $\rho_0 = 1 - \sum_{\alpha=1}^n \rho_\alpha$.
For the covariances (generalized compressibilities)
\be
\kappa_{\alpha \beta} := \frac{\partial \rho_\alpha}{\partial \mu_\beta} =
\frac{1}{L} \exval{(N_{\alpha}-\exval{N_{\alpha}})(N_{\beta}-\exval{N_{\beta}})} 
\ee
one obtains due to the product structure
\be
\kappa_{\alpha \beta} = \rho_\alpha 
(\delta_{\alpha,\beta}-\rho_\beta).
\ee
The compressibility
matrix with matrix elements $\kappa_{\alpha\beta}$ is denoted by $K$. By construction $K=K^T$ is symmetric.

\subsection{Stationary current-density relation}

Particle conservation yields the discrete continuity equation
\begin{equation}
\mathcal{L}^{per} n^\alpha_{k} = j_{k-1}^\alpha - j_k^\alpha
\label{cebulk}
\end{equation}
with the instantaneous currents 
\be
 j_k^\alpha = \sum_{\beta=0}^n g_{\alpha,\beta} n_k^\alpha n_{k+1}^\beta - g_{\beta,\alpha} n_k^\beta n_{k+1}^\alpha.
\label{instcurr}
\ee
The  factorization property of the
grand canonical stationary distribution yields the stationary current-density relation
\bel{statcur}
j_\alpha := \exval{ j_k^\alpha} = \rho_\alpha 
\left(f_\alpha- \sum_{\beta=1}^n f_\beta \rho_\beta\right).
\ee
The flux Jacobian $J$ with matrix elements
\bel{velmat}
J_{\alpha\beta} = \frac{\partial j_\alpha}{\partial \rho_\beta} = 
\left[ f_\alpha  - \sum_{\gamma=1}^n f_{\gamma} \rho_\gamma\right]  \delta_{\alpha,\beta}-  f_\beta \rho_\alpha
\ee
satisfies the Onsager-type current symmetry $JK = (JK)^T$ \cite{Toth03,Gris11}. The eigenvalues $v_\alpha$ of $J$ 
are the collective velocities of the normal modes
which are the center of mass velocities of localized
perturbations of the stationary distribution \cite{Schu03}.

\section{Hydrodynamic behavior of the PEP($n$)}
\label{hydro}

We consider the conservative PEP($n$) on the infinite lattice and its
Euler-scale hydrodynamic description. For one-dimensional interacting particle
systems with finitely many local states and several conserved quantities, a
general hydrodynamic-limit theory was developed by T\'oth and Valk\'o
\cite{Toth03}. In the present work, we use this general framework as motivation for the
macroscopic equations and their explicit solution below.

Let $\rho_\alpha(x,t)$, $\alpha=1,\dots,n$, denote the macroscopic density of
species $\alpha$ at position $x\in\R$ and time $t\geq 0$. Since the microscopic
dynamics conserves the number of particles of each species, the Euler-scale
evolution is governed by $n$ coupled Burgers equations, viz., the system conservation laws
\bel{hydropep}
\partial_t \rho_\alpha(x,t)+\partial_x j_\alpha(\brho(x,t))=0,
\qquad \alpha=1,\dots,n,
\ee
where $\brho=(\rho_1,\dots,\rho_n)$ and the fluxes are given by the stationary
current-density relation \eref{statcur}.
Introducing the mean drift
\bel{meandrift}
U(\brho):=\sum_{\beta=1}^n f_\beta \rho_\beta,
\ee
the hydrodynamic equations take the equivalent form
\bel{hydroU}
\partial_t \rho_\alpha+\partial_x\Bigl[\rho_\alpha(f_\alpha-U)\Bigr]=0,
\qquad \alpha=1,\dots,n.
\ee

Because of the exclusion constraint, the vacancy density is not an independent
field but is given by
\be
\rho_0(x,t)=1-\sum_{\beta=1}^n \rho_\beta(x,t).
\ee
Since we have $f_0=0$, then $\rho_0$ satisfies the same conservation law,
\be
\partial_t \rho_0+\partial_x\Bigl[\rho_0(f_0-U)\Bigr]=0,
\ee
so that the full system can be written in the symmetric form
\bel{fullhydro}
\partial_t \rho_\alpha+\partial_x\Bigl[\rho_\alpha(f_\alpha-U)\Bigr]=0,
\qquad \alpha=0,1,\dots,n,
\ee
together with the algebraic constraint
\be
\sum_{\alpha=0}^n \rho_\alpha=1.
\ee
Thus, the natural state space of the hydrodynamic system is the simplex
\be
\mathcal{D}
=
\left\{
(\rho_1,\dots,\rho_n)\in \R^n:
\rho_\alpha\geq 0,\;
\sum_{\alpha=1}^n \rho_\alpha\leq 1
\right\},
\ee
with interior
\be
\mathring{\mathcal{D}}
=
\left\{
(\rho_1,\dots,\rho_n)\in \R^n:
\rho_\alpha>0,\;
\sum_{\alpha=1}^n \rho_\alpha<1
\right\}.
\ee
In the following subsection we show that, for states in
$\mathring{\mathcal{D}}$, the system \eref{hydropep} admits global Riemann
variables, which allow us to solve the associated Riemann problem explicitly.

\subsection{The Riemann invariants}
We now introduce a set of variables which diagonalize the hydrodynamic system
\eref{fullhydro}. Throughout this subsection we assume that the $n+1$ numbers $f_0,f_1,\dots,f_n$
are pairwise distinct. This is the non-degenerate case. If two of them coincide, the construction below degenerates and strict hyperbolicity may fail.
To a state $\rho\in\mathring{\mathcal D}$ we associate the rational function
\begin{equation}
R_n(\omega) := \sum_{\alpha=0}^n \frac{\rho_\alpha}{f_\alpha-\omega}.
\label{eq:Rmu}
\end{equation}
This function has a simple pole at each $\omega=f_\alpha$. Let $\sigma$ be the permutation of $\{0,1,\dots,n\}$ such that
\[
 f_{\sigma(0)} < f_{\sigma(1)} < \cdots < f_{\sigma(n)}.
\]
The ordering is only used to identify the intervals between consecutive pole locations; the original notation $\rho_\alpha,f_\alpha$ is kept throughout.

\begin{proposition}[Existence of one root in each gap]
For every state $\rho\in\mathring{\mathcal D}$, the function $R_n$ has exactly one simple zero in each interval
\[
\bigl(f_{\sigma(i-1)},f_{\sigma(i)}\bigr),
\qquad i=1,\dots,n.
\]
\end{proposition}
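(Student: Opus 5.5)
The plan is to combine an intermediate value argument on each gap with a degree count that rules out any further zeros. Since $\rho\in\mathring{\mathcal D}$, every $\rho_\alpha$ is strictly positive, including $\rho_0=1-\sum_{\beta\ge1}\rho_\beta>0$. The poles $f_0,\dots,f_n$ are pairwise distinct, so $R_n$ is a real rational function with exactly $n+1$ simple poles and all residues nonzero. Differentiating term by term gives
\[
R_n'(\omega)=\sum_{\alpha=0}^n \frac{\rho_\alpha}{(f_\alpha-\omega)^2}>0
\]
for every $\omega\notin\{f_0,\dots,f_n\}$. Hence $R_n$ is strictly increasing on each open interval between consecutive poles. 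This immediately gives at most one zero per gap, and any such zero is simple because $R_n'\neq0$ there.

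For existence, fix $i\in\{1,\dots,n\}$ and look at the gap $(f_{\sigma(i-1)},f_{\sigma(i)})$. As $\omega\downarrow f_{\sigma(i-1)}$, the term $\rho_{\sigma(i-1)}/(f_{\sigma(i-1)}-\omega)$ tends to $-\infty$, while all other terms stay bounded. As $\omega\uparrow f_{\sigma(i)}$, the term $\rho_{\sigma(i)}/(f_{\sigma(i)}-\omega)$ tends to $+\infty$. Both limits use $\rho_{\sigma(i-1)},\rho_{\sigma(i)}>0$, and this is exactly where interiority of the state enters. Since $R_n$ is continuous on the gap, the intermediate value theorem yields a zero, and monotonicity makes it unique.

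It remains to confirm that the statement is consistent with no stray zeros elsewhere; strictly this is not needed for the proposition as worded. Writing $R_n=P/Q$ with $Q(\omega)=\prod_\alpha (f_\alpha-\omega)$ of degree $n+1$, the numerator $P$ has degree at most $n$. Its leading coefficient is proportional to $\sum_\alpha\rho_\alpha=1\neq0$, so $P$ has degree exactly $n$. The $n$ zeros found in the gaps therefore exhaust all zeros, and in particular there are none outside $[f_{\sigma(0)},f_{\sigma(n)}]$. This is consistent with $R_n(\omega)\sim-1/\omega$ at infinity, which keeps $R_n$ of one sign on each of the unbounded intervals.

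The main point needing care is the boundary behaviour. If some $\rho_\alpha=0$, the corresponding pole disappears and the sign change across that gap can fail. This is why the statement is restricted to $\mathring{\mathcal D}$, and the proof should say explicitly that strict positivity of all $\rho_\alpha$, including $\rho_0$, is used both for $R_n'>0$ and for the one-sided limits at the poles.
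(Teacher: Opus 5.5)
Your proof is correct and follows the paper's argument essentially step for step: $R_n'>0$ gives strict monotonicity on each gap, the one-sided limits $-\infty$ and $+\infty$ at the two bounding poles give existence via the intermediate value theorem, and $R_n'\neq 0$ at the zero gives simplicity. Your extra degree count (the numerator has exact degree $n$) is not needed for the proposition, but it usefully shows that these $n$ zeros are all of the zeros, which the paper uses implicitly when it writes the factorization \eqref{eq:R-factorized}.
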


\begin{proof}
Differentiate \eqref{eq:Rmu} with respect to $\omega$:
\[
R_n'(\omega)=\sum_{\alpha=0}^n \frac{\rho_\alpha}{(f_\alpha-\omega)^2}>0
\]
whenever $\omega\neq f_\alpha$ for all $\alpha$. Hence $R_n$ is strictly increasing on each connected component of
\[
\R\setminus\{f_0,\dots,f_n\}.
\]
Now fix $i\in\{1,\dots,n\}$. As $\omega\downarrow f_{\sigma(i-1)}$, the term with index $\sigma(i-1)$ dominates and
$R_n(\omega)\to -\infty.$
As $\omega\uparrow f_{\sigma(i)}$, the term with index $\sigma(i)$ dominates and
$R_n(\omega)\to +\infty.$
Since $R_n$ is continuous and strictly increasing on the interval
$\bigl(f_{\sigma(i-1)},f_{\sigma(i)}\bigr),$
it has exactly one zero there. The non-degeneracy of the root follows since $R_n'(\omega)>0$ at the zero.
\end{proof}
We denote the unique zero in the interval
$\bigl(f_{\sigma(i-1)},f_{\sigma(i)}\bigr)$ by
\[
z_i=z_i(\rho),\qquad i=1,\dots,n,
\]
so that
\begin{equation}
 f_{\sigma(i-1)}<z_i<f_{\sigma(i)},
 \qquad i=1,\dots,n.
 \label{eq:r-domain}
\end{equation}
We will show next that the numbers $z_1,\dots,z_n$ constitute the Riemann variables.

Because $R_n(\omega)$ has poles at $f_0,\dots,f_n$ and zeros at $z_1,\dots,z_n$, and because
\[
R_n(\omega)\sim -\frac{1}{\omega}
\qquad (\omega\to\infty)
\]
by the identity $\sum_{\alpha=0}^n \rho_\alpha=1$, we obtain the factorization
\begin{equation}
R_n(\omega) = -\frac{\prod_{i=1}^n (\omega-z_i)}{\prod_{\alpha=0}^n (\omega-f_\alpha)}.
\label{eq:R-factorized}
\end{equation}
We show next that this gives an explicit inverse formula for the densities.
\begin{proposition}[Inverse map]
For each $\alpha=0,1,\dots,n$,
\begin{equation}
\rho_\alpha
=
\frac{\prod_{i=1}^n (f_\alpha-z_i)}{\prod_{\beta\ne\alpha}(f_\alpha-f_\beta)}.
\label{eq:inverse-density}
\end{equation}
Therefore the map
\[
\rho\longmapsto (z_1,\dots,z_n)
\]
is one-to-one on $\mathring{\mathcal D}$.
\end{proposition}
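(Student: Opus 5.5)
The inverse formula will come from residues. Compare the two expressions for $R_n$: the defining sum \eqref{eq:Rmu} and the factorized form \eqref{eq:R-factorized}.

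First, I would justify \eqref{eq:R-factorized} carefully. Multiply \eqref{eq:Rmu} by $P(\omega):=\prod_{\beta=0}^n(\omega-f_\beta)$. This gives
\[
Q(\omega):=P(\omega)R_n(\omega)=-\sum_{\alpha=0}^n \rho_\alpha\prod_{\beta\neq\alpha}(\omega-f_\beta),
\]
a polynomial of degree at most $n$ whose coefficient of $\omega^n$ is $-\sum_\alpha\rho_\alpha=-1$. By the previous proposition, $R_n$ has $n$ distinct zeros $z_1,\dots,z_n$, one in each gap, and none of them is a pole. Hence $Q$ vanishes at these $n$ distinct points, so $Q(\omega)=-\prod_{i=1}^n(\omega-z_i)$, which is \eqref{eq:R-factorized}.

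Second, I would extract $\rho_\alpha$ from the residue at the simple pole $\omega=f_\alpha$. From \eqref{eq:Rmu}, using that the $f_\beta$ are pairwise distinct,
\[
\lim_{\omega\to f_\alpha}(\omega-f_\alpha)R_n(\omega)=-\rho_\alpha .
\]
From \eqref{eq:R-factorized} the same limit equals
\[
-\frac{\prod_i(f_\alpha-z_i)}{\prod_{\beta\neq\alpha}(f_\alpha-f_\beta)} .
\]
Equating the two expressions gives \eqref{eq:inverse-density}. Equivalently, one can evaluate $Q(f_\alpha)$ directly: every term in the sum except the $\alpha$-th vanishes at $\omega=f_\alpha$.

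Third, injectivity follows at once. By \eqref{eq:inverse-density}, the vector $(\rho_1,\dots,\rho_n)$, and also $\rho_0$, is an explicit function of $(z_1,\dots,z_n)$. So two states in $\mathring{\mathcal D}$ with the same Riemann variables have identical densities. Well-definedness of $z_i(\rho)$ is supplied by the existence proposition. As a consistency check, the interlacing \eqref{eq:r-domain} makes the right-hand side of \eqref{eq:inverse-density} positive: the numerator and denominator carry the same number of negative factors, namely the position of $\alpha$ in the ordering $\sigma$. This is not needed for injectivity, however.

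The only point needing care is the first step. One must make sure that the leading coefficient is fixed by $\sum_{\alpha=0}^n\rho_\alpha=1$, and that the $n$ roots found in the gaps exhaust the zeros of the degree-$n$ polynomial. Both are immediate here, so I expect no real obstacle.
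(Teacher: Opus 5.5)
Your proof is correct and follows the same route as the paper: compute the residue of $R_n$ at $\omega=f_\alpha$ once from \eqref{eq:Rmu} and once from \eqref{eq:R-factorized}, and equate the two. Your degree count on $Q=PR_n$ usefully makes rigorous the factorization, which the paper only asserts from the pole/zero locations and the $-1/\omega$ asymptotics. One small slip in your positivity aside: for $\alpha=\sigma(j)$, the common number of negative factors in numerator and denominator is $n-j$ (the number of $f_\beta$ exceeding $f_\alpha$), not $j$; the sign conclusion is unaffected.
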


\begin{proof}
Take residues at the pole $\omega=f_\alpha$. From \eqref{eq:Rmu}, the residue is $-\rho_\alpha$ because
\[
\frac{\rho_\alpha}{f_\alpha-\omega}=-\frac{\rho_\alpha}{\omega-f_\alpha}.
\]
From \eqref{eq:R-factorized}, the residue is
\[
-\frac{\prod_{i=1}^n (f_\alpha-z_i)}{\prod_{\beta\ne\alpha}(f_\alpha-f_\beta)}.
\]
Equating the two expressions gives \eqref{eq:inverse-density}.
\end{proof}

The inequalities \eqref{eq:r-domain} define the natural domain of the Riemann variables:
\[
\mathcal D_z
:=
\left\{(z_1,\dots,z_n)\in\R^n:\ f_{\sigma(i-1)}<z_i<f_{\sigma(i)},\ i=1,\dots,n\right\}.
\]
Since the inverse map is explicit, the map $\rho\mapsto z$ is a smooth bijection between $\mathring{\mathcal D}$ and $\mathcal D_z$.

\begin{remark}
The formula \eqref{eq:inverse-density} automatically produces positive densities when the variables $z_i$ satisfy the interlacing inequalities \eqref{eq:r-domain}.
\end{remark}

We now derive the PDE satisfied by $R_n(\omega)$. Since
\[
\p_t\rho_\alpha = -\p_x\bigl(\rho_\alpha(f_\alpha-U)\bigr)
=-(f_\alpha-U)\p_x\rho_\alpha + \rho_\alpha\,\p_x U,
\]
we obtain
\begin{align*}
\p_t R_n(\omega)
&=
\sum_{\alpha=0}^n \frac{\p_t\rho_\alpha}{f_\alpha-\omega}
\\
&=
-\sum_{\alpha=0}^n \frac{(f_\alpha-U)\p_x\rho_\alpha}{f_\alpha-\omega}
+ (\p_x U)\sum_{\alpha=0}^n \frac{\rho_\alpha}{f_\alpha-\omega}.
\end{align*}
Now use
\[
\frac{f_\alpha-U}{f_\alpha-\omega}=1+\frac{\omega-U}{f_\alpha-\omega},
\]
so that
\begin{align*}
\sum_{\alpha=0}^n \frac{(f_\alpha-U)\p_x\rho_\alpha}{f_\alpha-\omega}
&=
\sum_{\alpha=0}^n \p_x\rho_\alpha
+
(\omega-U)\sum_{\alpha=0}^n \frac{\p_x\rho_\alpha}{f_\alpha-\omega}
\\
&=
(\omega-U)\,\p_x R_n(\omega),
\end{align*}
because $\sum_{\alpha=0}^n\rho_\alpha=1$ implies $\sum_{\alpha=0}^n\p_x\rho_\alpha=0$. We conclude that
\begin{equation}
\p_t R_n(\omega) + (\omega-U)\p_x R_n(\omega) - (\p_x U)R_n(\omega)=0.
\label{eq:R-pde}
\end{equation}


We now evaluate \eqref{eq:R-pde} at a zero $\omega=z_i(x,t)$ of $R_n$. Since
\[
R_n(z_i(x,t),x,t)=0,
\]
we have, by the chain rule,
\[
\p_t R_n(z_i)+R_n'(z_i)\,\p_t z_i = 0,
\qquad
\p_x R_n(z_i)+R_n'(z_i)\,\p_x z_i = 0.
\]
Substituting into \eqref{eq:R-pde} at $\omega=z_i$ gives
\[
-R_n'(z_i)\,\p_t z_i -(z_i-U)R_n'(z_i)\,\p_x z_i = 0.
\]
Since $R_n'(z_i)>0$, we obtain the diagonal system
\begin{equation}
\p_t z_i + \lambda_i(z)\,\p_x z_i = 0,
\qquad i=1,\dots,n,
\label{eq:diagonal-system}
\end{equation}
with characteristic speeds
\begin{equation}
\lambda_i(z)=z_i-U.
\label{eq:lambda-ri-minus-U}
\end{equation}
This proves that the variables $z_1,\dots,z_n$ are Riemann variables for the system.

To make the speeds completely explicit, we expand \eqref{eq:R-factorized} at infinity. On the one hand,
\[
R_n(\omega) = -\frac{1}{\omega} - \frac{U}{\omega^2}+O(\omega^{-3}).
\]
On the other hand,
\[
R_n(\omega)
=
-\frac{1}{\omega}
\frac{\prod_{i=1}^n (1-z_i/\omega)}{\prod_{\alpha=0}^n (1-f_\alpha/\omega)}
=
-\frac{1}{\omega}
\left(1+\frac{\sum_{\alpha=0}^n f_\alpha-\sum_{i=1}^n z_i}{\omega}+O(\omega^{-2})\right).
\]
Comparing the coefficients of $\omega^{-2}$ yields
\begin{equation}
U = \sum_{\alpha=0}^n f_\alpha - \sum_{i=1}^n z_i.
\label{eq:U-in-r}
\end{equation}
Therefore
\begin{equation}
\lambda_i(z)
=
z_i + \sum_{k=1}^n z_k - \sum_{\alpha=0}^n f_\alpha,
\qquad i=1,\dots,n.
\label{eq:explicit-speeds}
\end{equation}

\begin{theorem}[Global Riemann variables]
\label{thm:global-riemann-variables}
Assume that $f_0=0,f_1,\dots,f_n$ are pairwise distinct. Then on the interior of the physical simplex, the variables $z_1,\dots,z_n$ defined as the zeros of
\[
R_n(\omega)=\sum_{\alpha=0}^n \frac{\rho_\alpha}{f_\alpha-\omega}
\]
provide a global change of variables which diagonalizes the system. More precisely:
\begin{enumerate}[label=\textup{(\roman*)}]
\item for each state there is exactly one $z_i$ in each gap between two consecutive ordered values of $f_0,\dots,f_n$;
\item the inverse map is given by \eqref{eq:inverse-density};
\item the PDE becomes the diagonal system \eqref{eq:diagonal-system} with speeds \eqref{eq:explicit-speeds}.
\end{enumerate}
\end{theorem}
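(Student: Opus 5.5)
The theorem collects what was established in this subsection, so the proof assembles the preceding propositions and fills the gaps needed for a genuine \emph{global} change of variables. Part (i) is the Proposition on the existence of one root in each gap. The argument there uses only $\rho_\alpha>0$ for all $\alpha=0,\dots,n$, which holds on $\mathring{\mathcal D}$ because $\rho_0=1-\sum_{\alpha\ge1}\rho_\alpha>0$. It also uses the pairwise distinctness of $f_0,\dots,f_n$, which makes the poles simple and separated.

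I would then justify the factorization \eqref{eq:R-factorized}. Clearing denominators, $-R_n(\omega)\prod_\alpha(\omega-f_\alpha)$ is a polynomial of degree at most $n$. Its leading coefficient is $\sum_\alpha\rho_\alpha=1$, and it vanishes at the $n$ distinct points $z_1,\dots,z_n$, so it equals $\prod_i(\omega-z_i)$.

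Part (ii) follows by comparing residues, as in the Inverse map proposition. For global bijectivity I would check two directions.
\begin{itemize}
\item Given $z\in\mathcal D_z$, define $\rho_\alpha$ by \eqref{eq:inverse-density}. Positivity comes from counting sign changes: if $f_\alpha=f_{\sigma(m)}$, both $\prod_i(f_\alpha-z_i)$ and $\prod_{\beta\ne\alpha}(f_\alpha-f_\beta)$ have exactly $n-m$ negative factors, by the interlacing.
\item The sum $\sum_\alpha\rho_\alpha=1$ follows from the partial-fraction expansion of $\prod_i(\omega-z_i)/\prod_\alpha(\omega-f_\alpha)$, read off from the $1/\omega$ coefficient at infinity.
\end{itemize}
So the image lies in $\mathring{\mathcal D}$, and the composition with $\rho\mapsto z$ is the identity, since the zeros of the reconstructed $R_n$ are the given $z_i$. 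Smoothness of $\rho\mapsto z$ follows from the implicit function theorem, because $R_n'(z_i)>0$. Smoothness of the inverse is explicit, so the map is a diffeomorphism.

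For part (iii), I would use the computation giving \eqref{eq:R-pde}, valid for smooth solutions of \eqref{fullhydro}. I would evaluate it at $\omega=z_i(x,t)$ and differentiate the identity $R_n(z_i(x,t);x,t)=0$. Division by $R_n'(z_i)>0$ then yields \eqref{eq:diagonal-system} with $\lambda_i=z_i-U$. The explicit form \eqref{eq:explicit-speeds} comes from matching the $\omega^{-2}$ coefficients in the expansion at infinity. This uses $\sum_\alpha f_\alpha\rho_\alpha=U$, which holds because $f_0=0$.

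The main obstacle is making sure the diagonalization is genuinely equivalent to the original system, not merely a consequence of it. Since $z\mapsto\rho$ is a diffeomorphism, the $n$ equations for $z$ are equivalent to the $n$ independent equations for $\rho_1,\dots,\rho_n$ for smooth solutions. I would argue this by noting that $\partial_t R_n+(\omega-U)\partial_x R_n-(\partial_xU)R_n$ is a rational function whose residues at $f_\alpha$ recover the $\alpha$-th conservation law. Its vanishing at the $n$ zeros, together with its behaviour at infinity, forces it to vanish identically. This equivalence holds only for classical solutions; shocks are treated separately.
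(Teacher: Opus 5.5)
Your proposal is correct and follows essentially the same route as the paper: monotonicity of $R_n$ between consecutive poles for (i), the factorization \eqref{eq:R-factorized} and comparison of residues for (ii), and evaluation of \eqref{eq:R-pde} at the moving zeros together with the $\omega^{-2}$ coefficient at infinity for (iii). Your additions are correct and make rigorous steps the paper only asserts: surjectivity onto $\mathcal D_z$ via sign counting and the partial-fraction identity $\sum_\alpha\rho_\alpha=1$, smoothness via the implicit function theorem, and the converse equivalence, where the residual rational function has numerator of degree at most $n-1$ because the $n+1$ conservation-law residuals sum to zero, so its vanishing at $z_1,\dots,z_n$ forces it to vanish identically.
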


\begin{corollary}[Strict hyperbolicity]
In the interior of the simplex,
\[
\lambda_j(z)-\lambda_i(z)=z_j-z_i,
\qquad i\ne j.
\]
Since the variables $z_i$ lie in disjoint ordered intervals, one has $z_1<z_2<\cdots<z_n$, hence
\[
\lambda_1<\lambda_2<\cdots<\lambda_n.
\]
Therefore the system is strictly hyperbolic on $\mathring{\mathcal D}$.
\end{corollary}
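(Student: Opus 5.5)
The corollary follows almost directly from Theorem~\ref{thm:global-riemann-variables}, and I would argue it in three short steps.

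\emph{Step 1: speed differences.} By part (iii) of the theorem the characteristic speeds are $\lambda_i(z)=z_i-U$, where $U=U(\brho)$ is the mean drift \eqref{meandrift}. This quantity is the same for every index $i$; equivalently, by \eqref{eq:explicit-speeds} it equals $\sum_\alpha f_\alpha-\sum_k z_k$, which is symmetric in the $z$'s. Subtracting gives $\lambda_j-\lambda_i=z_j-z_i$ at once.

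\emph{Step 2: ordering.} By part (i) of the theorem, i.e.\ by \eqref{eq:r-domain}, each $z_i$ lies in the open interval $(f_{\sigma(i-1)},f_{\sigma(i)})$. These intervals are disjoint and increasingly ordered in $i$. Hence $z_i<f_{\sigma(i)}<z_{i+1}$ for every state in $\mathring{\mathcal D}$, which gives $z_1<\cdots<z_n$ and therefore $\lambda_1<\cdots<\lambda_n$.

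\emph{Step 3: link to the original system.} Strict hyperbolicity must be a statement about the flux Jacobian $J$ in \eqref{velmat}, not merely about the diagonal system. So I would check that the $\lambda_i$ really are the eigenvalues of $J$. The map $\rho\mapsto z$ is a smooth bijection with smooth inverse \eqref{eq:inverse-density}. If its Jacobian $Z=\partial z/\partial\rho$ is invertible, the computation leading to \eqref{eq:diagonal-system} shows $ZJZ^{-1}=\mathrm{diag}(\lambda_1,\dots,\lambda_n)$ on $\mathring{\mathcal D}$. Then $J$ has $n$ real, pairwise distinct eigenvalues, hence a full set of eigenvectors, which is strict hyperbolicity.

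The only step needing care is the invertibility of $Z$. A smooth bijection alone does not guarantee a nonsingular differential, so I would argue it concretely. By the implicit function theorem, $\partial z_i/\partial\rho$ has components $-\partial_{\rho}R_n(z_i)/R_n'(z_i)$. We know $R_n'(z_i)>0$. Linear independence of the rows then follows from the nonvanishing of a Cauchy-type determinant built from $1/(f_\alpha-z_i)$, which holds because the $f_\alpha$ and $z_i$ are all distinct.

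Alternatively, one can differentiate the explicit inverse \eqref{eq:inverse-density} and use the same Cauchy determinant. Either route shows that the diagonalization holds at every point of $\mathring{\mathcal D}$, which completes the proof.
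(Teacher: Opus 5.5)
Your Steps 1 and 2 are exactly the paper's argument: $\lambda_i=z_i-U$ with $U$ common to all families, and the gap condition \eqref{eq:r-domain} forcing $z_1<\cdots<z_n$. Your Step 3 correctly supplies what the paper leaves implicit, namely that these $\lambda_i$ are the eigenvalues of $J$ via $ZJZ^{-1}=\mathrm{diag}(\lambda_1,\dots,\lambda_n)$. The Cauchy-determinant computation is correct but not needed: $z(\rho)$ is smooth by the implicit function theorem and its inverse \eqref{eq:inverse-density} is polynomial, so the chain rule already makes $Z$ invertible.
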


\begin{corollary}[Genuine nonlinearity]
Each characteristic field is genuinely nonlinear, because
\[
\frac{\partial \lambda_i}{\partial z_i}=2,
\qquad i=1,\dots,n.
\]
\end{corollary}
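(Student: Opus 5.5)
Genuine nonlinearity of the $i$-th field means that the derivative of $\lambda_i$ along the right eigenvector $r_i$ of the flux Jacobian $J$ never vanishes on $\mathring{\mathcal D}$. I will work entirely in the Riemann variables of Theorem~\ref{thm:global-riemann-variables}.

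First I identify the eigenvectors. In the diagonal system \eqref{eq:diagonal-system} the quasilinear matrix is $\mathrm{diag}(\lambda_1,\dots,\lambda_n)$. The right eigenvector of the $i$-th field is therefore the coordinate vector $e_i$ in $z$-coordinates, i.e.\ the direction $\partial/\partial z_i$. This is legitimate because the map $\rho\mapsto z$ is a smooth bijection with smooth inverse \eqref{eq:inverse-density}. Pushed back to $\rho$-coordinates, $r_i=\partial\rho/\partial z_i$ is nonzero and is the eigenvector of $J$ with eigenvalue $\lambda_i$.

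To see the last claim, I substitute $\rho=\rho(z)$ into \eqref{hydropep}. This gives $(\partial\rho/\partial z)\,\partial_t z + J\,(\partial\rho/\partial z)\,\partial_x z=0$. Comparing with \eqref{eq:diagonal-system} yields $J\,\partial\rho/\partial z=(\partial\rho/\partial z)\,\mathrm{diag}(\lambda)$. Hence each column $\partial\rho/\partial z_i$ is an eigenvector of $J$ with eigenvalue $\lambda_i$.

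Next I compute the derivative. Since $\nabla_\rho\lambda_i\cdot r_i=\partial\lambda_i/\partial z_i$, the explicit formula \eqref{eq:explicit-speeds}, $\lambda_i=z_i+\sum_k z_k-\sum_\alpha f_\alpha$, gives
\[
\frac{\partial\lambda_i}{\partial z_i}=1+1=2\neq 0
\]
at every point of $\mathcal D_z$, and hence on all of $\mathring{\mathcal D}$. Strict hyperbolicity, established in the preceding corollary, makes each eigenvalue simple, so $r_i$ is unique up to scaling and the conclusion does not depend on the choice of normalization.

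The only step needing care is the identification of $\partial\rho/\partial z_i$ as the eigenvector of $J$, which requires the Jacobian $\partial\rho/\partial z$ to be invertible. This follows from the smooth bijectivity already established. Alternatively, one can differentiate \eqref{eq:inverse-density} directly: the $\alpha$-th entry of $r_i$ is $-\rho_\alpha/(f_\alpha-z_i)$, which is nonzero in the interior of the simplex.
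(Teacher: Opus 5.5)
Your proposal is correct and follows the paper's argument. The paper simply reads off $\partial\lambda_i/\partial z_i=1+1=2$ from \eqref{eq:explicit-speeds}, implicitly taking $\partial/\partial z_i$ as the $i$-th eigendirection. You additionally make explicit that $\partial\rho/\partial z_i$, with entries $-\rho_\alpha/(f_\alpha-z_i)$, is the right eigenvector of $J$, which is a welcome but inessential elaboration.
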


\begin{remark}[Semi-Hamiltonian structure]
The diagonal system belongs to the class of semi-Hamiltonian systems
of hydrodynamic type in the sense of Tsarev, also called rich systems
in the conservation-law literature \cite{Tsarev,Dubrovin,Serre}. Indeed, for
$i\neq j$,
\[
\frac{\partial_{z_j}\lambda_i}{\lambda_j-\lambda_i}
=
\frac{1}{z_j-z_i}.
\]
Hence, for distinct $i,j,k$,
\[
\partial_{z_k}
\left(
\frac{\partial_{z_j}\lambda_i}{\lambda_j-\lambda_i}
\right)
=
\partial_{z_j}
\left(
\frac{\partial_{z_k}\lambda_i}{\lambda_k-\lambda_i}
\right)
=0.
\]
Thus Tsarev's compatibility conditions are satisfied. Consequently,
the system falls within the class to which Tsarev's generalized
hodograph method applies \cite{Tsarev}. This observation is not needed
for the self-similar Riemann solutions constructed below, but it places
the model in the standard integrable class of diagonal hydrodynamic-type
systems.
\end{remark}

\subsection{Riemann problem in the $z$-variables:}

We now consider the Riemann problem, which consists of the system of conservation laws \eref{hydropep} defined in the real line with initial data given by:
\begin{equation}
\rho(x,0)=
\begin{cases}
\rho^L,& x<0,\\
\rho^R,& x>0,
\end{cases}
\label{eq:Riemann-data-rho}
\end{equation}
where the left and right states belong to the interior of the simplex.
Let
\[
\mathbf{z}^L=(z_1^L,\dots,z_n^L),
\qquad
\mathbf{z}^R=(z_1^R,\dots,z_n^R)
\]
be the corresponding Riemann variables. Because the map $\rho\mapsto z$ is invertible, solving the Riemann problem in $\rho$ is equivalent to solving it in $z$.
We recall from the theory of hyperbolic conservation laws
\cite{Lax73,Lax00,Serr00,Serre} that the solution of the Riemann problem
is built from elementary waves, each associated with one of the $n$
characteristic families. For introductory reviews, see \cite[Ch.~1]{ZahraMulti} and
\cite{zahra2025hydrodynamics}. These waves are spatially separated, with the natural intermediate states given by:
\begin{equation}
\mathbf{z}^{(k)}:=\bigl(z_1^R,\dots,z_k^R,z_{k+1}^L,\dots,z_n^L\bigr),
\qquad k=0,1,\dots,n,
\label{eq:intermediate-states}
\end{equation}
with the conventions
\[
\mathbf{z}^{(0)}=\mathbf{z}^L,
\qquad
\mathbf{z}^{(N)}=\mathbf{z}^R.
\]
The $k$-th wave connects $\mathbf{z}^{(k-1)}$ to $\mathbf{z}^{(k)}$, so only the variable $z_k$ changes across that wave.

\paragraph{Reduction of the $k$-th family to a scalar conservation law:}
Fix $k\in\{1,\dots,n\}$ and freeze all the other Riemann variables at
\[
z_j=z_j^R\quad (j<k),
\qquad
z_j=z_j^L\quad (j>k).
\]
Along the $k$-th wave only $z_k$ varies. Define the constant
\begin{equation}
C_k:=\sum_{j<k} z_j^R + \sum_{j>k} z_j^L - \sum_{\alpha=0}^n f_\alpha.
\label{eq:Ck}
\end{equation}
Then along this wave,
\begin{equation}
\lambda_k=2z_k+C_k.
\label{eq:lambda-k-reduced}
\end{equation}
Hence the equation for $z_k$ reduces to the scalar conservation law
\begin{equation}
\p_t z_k + \p_x g_k(z_k)=0,
\qquad
g_k(q)=q^2+C_k q,
\label{eq:scalar-reduction}
\end{equation}
up to an irrelevant additive constant in the flux, this is simply the Burgers' equation.
Since
\[
g_k''(q)=2>0,
\]
the reduced scalar flux is strictly convex.

Therefore the $k$-th wave is:
\begin{itemize}
\item a rarefaction if $z_k^L<z_k^R$;
\item a shock if $z_k^L>z_k^R$;
\item absent if $z_k^L=z_k^R$.
\end{itemize}
\paragraph{The $k$-th rarefaction fan:}Assume first that
\[
z_k^L<z_k^R.
\]
Then the $k$-th wave is a centered rarefaction fan. Its left and right edge speeds are
\begin{equation}
\lambda_k^-:=\lambda_k\bigl(\mathbf{z}^{(k-1)}\bigr)=2z_k^L+C_k,
\qquad
\lambda_k^+:=\lambda_k\bigl(\mathbf{z}^{(k)}\bigr)=2z_k^R+C_k.
\label{eq:rarefaction-edges}
\end{equation}
Inside the fan, the self-similar solution $z(x,t)=\mathbf{z}(\xi)$ with $\xi=x/t$ is given by
\begin{equation}
z_j(\xi)=
\begin{cases}
z_j^R,& j<k,\\
\displaystyle \frac{\xi-C_k}{2},& j=k,\\
z_j^L,& j>k,
\end{cases}
\qquad \lambda_k^-\le \xi\le \lambda_k^+.
\label{eq:rarefaction-profile}
\end{equation}
Outside that interval, $z_j$ takes the constant values according to $\mathbf{z}^{(k-1)}$ and $\mathbf{z}^{(k)}$.

\paragraph{The $k$-th shock wave:}Assume now that
\[
z_k^L>z_k^R.
\]
Then the $k$-th wave is a Lax shock. Its speed is the Rankine--Hugoniot speed for the scalar flux \eqref{eq:scalar-reduction}:
\begin{equation}
s_k
=
\frac{g_k(z_k^R)-g_k(z_k^L)}{z_k^R-z_k^L}
=
z_k^L+z_k^R+C_k.
\label{eq:shock-speed}
\end{equation}
The entropy condition is automatic because the flux is convex:
\[
\lambda_k\bigl(\mathbf{z}^{(k)}\bigr)=2z_k^R+C_k
< s_k <
2z_k^L+C_k=\lambda_k\bigl(\mathbf{z}^{(k-1)}\bigr).
\]
So the $k$-th wave is the discontinuity
\begin{equation}
\mathbf{z}(\xi)=
\begin{cases}
\mathbf{z}^{(k-1)},& \xi<s_k,\\
\mathbf{z}^{(k)},& \xi>s_k.
\end{cases}
\label{eq:shock-profile}
\end{equation}

\begin{remark}
Although the variables $z_i$ are not conservative variables, the formula \eqref{eq:shock-speed} is consistent with the Rankine--Hugoniot conditions for the original conservative system in $\rho$. Indeed, along a $k$-wave all densities are affine functions of $z_k$ by \eqref{eq:inverse-density}, so each current $j_\alpha$ becomes a quadratic polynomial in $z_k$ with the same scalar shock speed.
\end{remark}

\subsubsection*{The full entropy solution of the Riemann problem}

We can now write the full self-similar solution.

\begin{theorem}[Riemann solution in Riemann variables]
Let $\rho^L,\rho^R$ be two states in the interior of the simplex, and let $\mathbf{z}^L,\mathbf{z}^R$ be the corresponding Riemann variables. Then the entropy solution of the Riemann problem \eqref{eq:Riemann-data-rho} is the self-similar function
\[
\mathbf{z}(x,t)=\mathbf{z}(\xi),\qquad \xi=\frac{x}{t},
\]
obtained by concatenating $n$ simple waves in the order $1,2,\dots,n$.

More precisely, define the intermediate states by \eqref{eq:intermediate-states}. For each $k=1,\dots,n$, the $k$-th wave connects $\mathbf{z}^{(k-1)}$ to $\mathbf{z}^{(k)}$ and is:
\begin{itemize}
\item a rarefaction if $z_k^L<z_k^R$, given by \eqref{eq:rarefaction-profile};
\item a shock if $z_k^L>z_k^R$, given by \eqref{eq:shock-profile} with speed \eqref{eq:shock-speed};
\item no wave if $z_k^L=z_k^R$.
\end{itemize}
The waves are automatically ordered from left to right according to their family index.
\end{theorem}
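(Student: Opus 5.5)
Since the $n$ elementary waves have already been built, the plan is to show three things: that their concatenation is well defined, that it is a weak solution of the conservative system \eqref{hydropep}, and that it satisfies the Lax entropy condition in each family. Uniqueness of the self-similar entropy solution among solutions composed of Lax waves would then follow from the standard Lax theory for strictly hyperbolic, genuinely nonlinear systems \cite{Lax73,Serr00}. By the two corollaries above, the system is strictly hyperbolic and genuinely nonlinear on $\mathring{\mathcal D}$. All intermediate states $\mathbf z^{(k)}$ lie in $\mathcal D_z$, because each coordinate keeps its own interlacing interval \eqref{eq:r-domain}. Hence, by the inverse map \eqref{eq:inverse-density}, every state along the solution lies in $\mathring{\mathcal D}$.

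\textbf{Step 1: ordering of the waves.} Write $[a_k,b_k]$ for the set of $\xi$ occupied by the $k$-th wave: $[\lambda_k^-,\lambda_k^+]$ for a fan, and $\{s_k\}$ for a shock. In every case $a_k \ge \min(2z_k^L,2z_k^R)+C_k$ and $b_k\le \max(2z_k^L,2z_k^R)+C_k$. The key computation is
\[
C_{k+1}-C_k = z_k^R - z_{k+1}^L,
\]
which gives
\[
\min(2z_{k+1}^L,2z_{k+1}^R)+C_{k+1} - \bigl(\max(2z_k^L,2z_k^R)+C_k\bigr)
= \min(2z_{k+1}^L,2z_{k+1}^R) - z_{k+1}^L + z_k^R - \max(2z_k^L,2z_k^R).
\]
This is not obviously positive, so I would compare the speeds through the common intermediate state $\mathbf z^{(k)}$. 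The right edge of wave $k$ is bounded by $\lambda_k(\mathbf z^{(k)})$ or by $\lambda_k(\mathbf z^{(k-1)})$. The left edge of wave $k+1$ is at least $\lambda_{k+1}(\mathbf z^{(k+1)})$ or $\lambda_{k+1}(\mathbf z^{(k)})$. In the state $\mathbf z^{(k)}$ we have $\lambda_{k+1}-\lambda_k = z^L_{k+1}-z^R_k>0$ by strict hyperbolicity, since $z^L_{k+1}$ and $z^R_k$ lie in disjoint ordered gaps.

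I expect this comparison to be the main obstacle, because a shock in family $k$ moves at a speed that mixes both endpoints. The plan is to use the explicit formulas \eqref{eq:shock-speed} and \eqref{eq:rarefaction-edges} and the interval bounds $f_{\sigma(k-1)}<z_k<f_{\sigma(k)}<z_{k+1}<f_{\sigma(k+1)}$. One writes each edge speed as $z_k^{\ast}+z_k^{\ast\ast}+C_k$ with $z_k^\ast,z_k^{\ast\ast}\in(f_{\sigma(k-1)},f_{\sigma(k)})$, and does likewise for $k+1$. The difference then reduces to sums of terms of the form $z_{k+1}^{\cdot}-z_k^{\cdot}$, each positive by interlacing. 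I would carry out this case check (shock/fan for each of the two waves) explicitly.

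\textbf{Step 2: weak solution and entropy.} Inside each wave only $z_k$ varies, so the reduction \eqref{eq:scalar-reduction} holds there. In a fan, the smooth profile \eqref{eq:rarefaction-profile} solves the diagonal system \eqref{eq:diagonal-system}, which is equivalent to \eqref{hydropep} for smooth solutions. Across a shock, the remark after \eqref{eq:shock-profile} gives the Rankine--Hugoniot condition for every $j_\alpha$. The reason is that along a $k$-wave the densities are affine in $z_k$ and $U$ is affine in $z_k$ by \eqref{eq:U-in-r}, so each $j_\alpha$ is quadratic in $z_k$ with leading coefficient proportional to that of $\rho_\alpha$. The chord slope of $j_\alpha$ is therefore $\rho_\alpha$'s chord slope times $z^L_k+z^R_k+C_k$. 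The Lax inequalities follow from convexity of $g_k$, as already displayed. Between waves the solution is constant, and by Step 1 the pieces glue into a single self-similar function.

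Finally, I would note that the curves traced out are the Lax wave curves of each family. Rarefaction curves and Hugoniot curves coincide, since only $z_k$ varies along either. Lax's theorem then identifies this as the unique solution in the class of admissible self-similar Lax-wave solutions.
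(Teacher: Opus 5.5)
Your proposal is correct and follows the paper's proof, which is exactly your Step~1 case check: the right edge of the $k$-wave is always $C_k+z_k^R+\max\{z_k^L,z_k^R\}$ and the left edge of the $(k+1)$-wave is always $C_{k+1}+z_{k+1}^L+\min\{z_{k+1}^L,z_{k+1}^R\}$. The shift $C_{k+1}-C_k=z_k^R-z_{k+1}^L$ is negative, so positivity does not come from a termwise pairing as your sketch suggests; it comes from this shift cancelling exactly against the $z_k^R$ and $z_{k+1}^L$ contained in the two edge speeds, which leaves the single interlacing gap $\min\{z_{k+1}^L,z_{k+1}^R\}-\max\{z_k^L,z_k^R\}>0$.

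Your Step~2 Rankine--Hugoniot check for the conserved $\rho_\alpha$ is a worthwhile addition that the paper only records as a remark. It is best justified by the exact identity $j_\alpha=a_\alpha f_\alpha(f_\alpha+C_k)-a_\alpha g_k(z_k)$, where $\rho_\alpha=a_\alpha(f_\alpha-z_k)$ with constant $a_\alpha$ along the wave; matching leading coefficients alone would not suffice. Your appeal to Lax's theorem for uniqueness only covers small-amplitude data and goes beyond what the paper claims.
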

\begin{proof}
The reduction of each family to the scalar convex conservation law
\eqref{eq:scalar-reduction} shows that the elementary $k$-wave is exactly
the scalar entropy solution for that flux. It remains to check that these
elementary waves are ordered from left to right according to the family index.

Fix $k\in\{1,\dots,n-1\}$. We compare the rightmost speed of the $k$-wave
with the leftmost speed of the $(k+1)$-wave.

By Theorem~\ref{thm:global-riemann-variables} (i), each $z_i$ always lies in the $i$-th fixed gap
between consecutive ordered values of $f_0,\dots,f_n$. Therefore
\[
\max\{z_k^L,z_k^R\}<\min\{z_{k+1}^L,z_{k+1}^R\}.
\]

For the $k$-wave, the largest speed is
\[
\mu_k^+=
\begin{cases}
2z_k^R+C_k, & z_k^L<z_k^R \quad\text{(rarefaction)},\\[1mm]
z_k^L+z_k^R+C_k, & z_k^L>z_k^R \quad\text{(shock)}.
\end{cases}
\]
Equivalently,
\[
\mu_k^+=C_k+z_k^R+\max\{z_k^L,z_k^R\}.
\]

Next, from \eqref{eq:Ck},
\[
C_{k+1}=C_k+z_k^R-z_{k+1}^L.
\]
For the $(k+1)$-wave, the smallest speed is
\[
\mu_{k+1}^-=
\begin{cases}
2z_{k+1}^L+C_{k+1}, & z_{k+1}^L<z_{k+1}^R \quad\text{(rarefaction)},\\[1mm]
z_{k+1}^L+z_{k+1}^R+C_{k+1}, & z_{k+1}^L>z_{k+1}^R \quad\text{(shock)}.
\end{cases}
\]
Equivalently,
\[
\mu_{k+1}^-=C_k+z_k^R+\min\{z_{k+1}^L,z_{k+1}^R\}.
\]

Using the gap inequality above, we obtain
\[
\mu_k^+<\mu_{k+1}^-.
\]
Thus the whole $k$-wave lies strictly to the left of the whole $(k+1)$-wave.
Since this holds for every $k=1,\dots,n-1$, the waves can be concatenated in
the order $1,2,\dots,n$ without overlap.
\end{proof}

Once the self-similar solution $\mathbf{z}(\xi)$ has been obtained, the density profile is reconstructed explicitly by
\begin{equation}
\rho_\alpha(\xi)
=
\frac{\prod_{i=1}^n \bigl(f_\alpha-z_i(\xi)\bigr)}{\prod_{\beta\ne\alpha}(f_\alpha-f_\beta)},
\qquad \alpha=0,1,\dots,n.
\label{eq:rho-reconstruction}
\end{equation}

\subsection{Degenerate values of the parameters}

So far we have assumed that the parameters $f_0,\dots,f_n$ are pairwise
distinct. We now briefly describe what changes when this assumption fails.
Assume still that $f_0=0$, but allow repetitions among the values
$f_\alpha$.

Let
\[
c_0<c_1<\dots<c_m
\]
be the distinct values taken by $f_0,\dots,f_n$, and let
\[
G_a:=\{\alpha\in\{0,\dots,n\}: f_\alpha=c_a\},
\qquad
\nu_a:=|G_a|.
\]
Define the total density of each degenerate block by
\[
\eta_a:=\sum_{\alpha\in G_a}\rho_\alpha.
\]
Summing \eqref{fullhydro} over $\alpha\in G_a$ gives
\[
\partial_t \eta_a+\partial_x\bigl(\eta_a(c_a-U)\bigr)=0,
\qquad
U=\sum_{a=0}^m c_a\,\eta_a.
\]
Thus the block totals $\eta_0,\dots,\eta_m$ satisfy exactly the same
system as before, but now with the \emph{distinct} parameters
$c_0,\dots,c_m$.

The rational function \eqref{eq:Rmu} becomes
\[
R_n(\omega)
=
\sum_{\alpha=0}^n \frac{\rho_\alpha}{f_\alpha-\omega}
=
\sum_{a=0}^m \frac{\eta_a}{c_a-\omega}.
\]
Hence only the distinct values $c_a$ matter: $R_n$ has $m+1$ poles and
therefore only $m$ zeros,
\[
c_0<z_1<c_1<\dots<z_m<c_m.
\]
The inverse formula \eqref{eq:inverse-density} is correspondingly replaced
by the reconstruction of the block totals,
\[
\eta_a
=
\frac{\prod_{i=1}^m (c_a-z_i)}
{\prod_{b\neq a}(c_a-c_b)}.
\]
So the variables $z_1,\dots,z_m$ determine the sums $\eta_a$, but no
longer determine the individual densities inside a degenerate block.

To describe the remaining freedom, write, whenever $\eta_a>0$,
\[
\theta_{a,\alpha}:=\frac{\rho_\alpha}{\eta_a},
\qquad \alpha\in G_a,
\qquad
\sum_{\alpha\in G_a}\theta_{a,\alpha}=1.
\]
Using the equations for $\rho_\alpha$ and $\eta_a$, one finds
\[
\partial_t \theta_{a,\alpha}+(c_a-U)\partial_x\theta_{a,\alpha}=0.
\]
Thus the internal composition of each degenerate block is simply
transported with speed $c_a-U$.

In particular, the loss of strict hyperbolicity has a simple meaning:
the genuinely nonlinear part of the dynamics is the reduced system for the
block totals $\eta_a$, while each block of multiplicity $\nu_a\ge2$
contributes $\nu_a-1$ linearly degenerate contact fields with speed
$c_a-U$. Accordingly, the Riemann problem in the degenerate case consists
of the usual shock/rarefaction waves for the reduced system, together with
contact discontinuities carrying the internal composition of the repeated
species.

For example, if $n=2$ and $f_1=f_2=c$, then
\[
\eta:=\rho_1+\rho_2
\]
satisfies
\[
\partial_t \eta+\partial_x\bigl(c\eta(1-\eta)\bigr)=0,
\]
while the fraction
\[
\theta:=\frac{\rho_1}{\rho_1+\rho_2}
\]
satisfies
\[
\partial_t \theta+c(1-\eta)\partial_x\theta=0.
\]
So one genuinely nonlinear family remains, and the second family becomes
a contact mode.

\section{PEP($n$) with open boundaries}
\label{open}
We now consider the process defined in the box $\mathbb{B}_L:=\{1,\dots L\}$ without any implied periodicity
on the site index $k$ of the occupation variables $\alpha_k$.
For $1\leq k \leq L-1$ particle exchanges between neighbouring sites
occur as defined above. However, at the boundary sites $1$ and
$L$
we describe particle exchange with hypothetical reservoirs
by the transitions $\beta \to \alpha$ with rates $b_{\alpha\beta}^{\pm}$
with superscripts $-$ and $+$ indicating the left and right boundary 
respectively. In analogy to the bulk exchange rates we also define
$b_{\alpha\alpha}:=0$ for $\alpha\in\{0,\dots,n\}$.\footnote{
Notice that the double subscript in the boundary rates
$b_{\alpha\beta}^{\pm}$ indicates the transition from state $\beta$
to the state $\alpha$ on the {\it same} boundary site 1 or $L$,
unlike the double subscript in the bulk rates $g_{\alpha\beta}$
which indicates the particle exchange $(\alpha,\beta) \to (\beta,\alpha)$
on the two neighbouring sites $(k,k+1)$.}

With the boundary generators $\mathcal{L}^{-}_{1}$ and $\mathcal{L}^{+}_{L}$
defined by
\begin{eqnarray}
(\mathcal{L}^{-}_1 f) (\balpha) & = &
 \sum_{\alpha=0}^n  \sum_{\beta=0}^n b_{\alpha,\beta}^{-} n^{\beta}_{1}
[ f(\alpha,\alpha_2,\dots,\alpha_L) - f(\balpha) ]
\label{genPEPm} \\
(\mathcal{L}^{+}_{L} f) (\balpha) & = &
 \sum_{\alpha=0}^n  \sum_{\beta=0}^n b_{\alpha,\beta}^{+} n^{\beta}_{L}
[ f(\alpha_1,\dots,\alpha_{L-1},\alpha) - f(\balpha) ]
\label{genPEPp}
 \end{eqnarray}
the generator of the PEP with open boundaries is given by
\begin{equation}
\mathcal{L}^{open}  = \mathcal{L}^{-}_1 
+ \sum_{k=1}^{L-1} \mathcal{L}^{ex}_{k,k+1} + \mathcal{L}^{+}_{L} .
\end{equation}

For the action of the generator on the occupation numbers $n^\alpha_k$ one then gets for $2\leq k \leq L-1$ 
the discrete continuity equation 
\begin{equation}
\mathcal{L}^{open} n^\alpha_{k} = j_{k-1}^\alpha - j_{k}^\alpha,
\label{ceopen}
\end{equation}
with the instantaneous bulk currents \eqref{instcurr}
and the boundary equations
\begin{equation}
\mathcal{L}^{open} n^\alpha_{1} = j_{-}^\alpha - j_1^\alpha, \quad
\mathcal{L}^{open} n^\alpha_{L} = j_{L-1}^\alpha - j_{+}^\alpha
\label{cepm}
\end{equation}
with the instantaneous boundary currents
\begin{equation}
j_{-}^{\alpha} =
\sum_{\beta=0}^n (b_{\alpha\beta}^{-}n^{\beta}_{1} - b_{\beta\alpha}^{-}n^{\alpha}_{1}) , \quad
j_{+}^{\alpha} =
\sum_{\beta=0}^n (b_{\beta\alpha}^{+}n^{\alpha}_{L} - b_{\alpha\beta}^{+}n^{\beta}_{L}) 
\label{instcurboundary}
\end{equation}

\subsection{Invariant measure}

The open PEP does not conserve particle number, however, for the invariant
measure the discrete continuity equation implies the current equalities
\begin{equation}
j^\alpha := \exval{j^\alpha_{-}} = \exval{j^\alpha_{k}} = \exval{j^\alpha_{+}}
\label{statcurrrel}
\end{equation}
for all $k\in\{1,\dots,L-1\}$. An invariant measure can be obtained in explicit
product form for some specific conditions on the boundary rates.
%
%
%

\begin{theo}
The product measure \eqref{invmeas} is invariant for the open PEP 
if and only if the boundary parameters are on the parameter
manifold 
\begin{equation}
\ba{l}
\displaystyle \sum_{\alpha=0}^n 
\left(b_{\beta,\alpha}^{-} \frac{\rho_\alpha}{\rho_\beta}  - b_{\alpha,\beta}^{-} 
+ b_{\alpha,0}^{-} 
- b_{0,\alpha}^{-} \frac{\rho_\alpha}{\rho_0}
\right)  - f_\beta = 0 \\[6mm]
\displaystyle \sum_{\alpha=0}^n 
\left(b_{\beta,\alpha}^{+} \frac{\rho_\alpha}{\rho_\beta}  - b_{\alpha,\beta}^{+} 
+ b_{\alpha,0}^{+} 
- b_{0,\alpha}^{+} \frac{\rho_\alpha}{\rho_0}
\right)  + f_\beta = 0 \\[6mm]
\displaystyle \sum_{\alpha=0}^n \left(b_{\alpha,0}^{-}  - 
b_{0,\alpha}^{-} \frac{\rho_\alpha}{\rho_0} \right)  
+ \sum_{\alpha=0}^n \left(b_{\alpha,0}^{+} - b_{0,\alpha}^{+} 
\frac{\rho_\alpha}{\rho_0}  \right)  = 0
\ea
\label{boundarymanifold}
\end{equation}
for all $\beta\in\{1,\dots,n\}$.
\end{theo}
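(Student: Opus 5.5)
Our approach is to check stationarity directly. For a product measure $\pi$ with full support, invariance under $\mathcal{L}^{open}$ is equivalent to the global balance condition
\[
\sum_{\balpha'} \pi(\balpha')\,w(\balpha'\to\balpha) - \pi(\balpha)\sum_{\balpha'} w(\balpha\to\balpha') = 0
\qquad\text{for all }\balpha .
\]
Dividing by $\pi(\balpha)$ gives a function $F(\balpha)=(\mathcal{L}^{open})^{*}\pi/\pi$, and the statement becomes: $F\equiv 0$ if and only if \eqref{boundarymanifold} holds. We will write $F$ as a sum of local contributions and match them term by term.

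\textbf{Bulk pairs.} For a bond $(k,k+1)$ in state $(\alpha_k,\alpha_{k+1})=(\gamma,\delta)$, the swap leaves the product measure unchanged. Its contribution is therefore
\[
g_{\delta,\gamma}-g_{\gamma,\delta} = -f_{\gamma,\delta} = f_\delta-f_\gamma ,
\]
using \eqref{stat}. Summing over $k=1,\dots,L-1$ telescopes to $f_{\alpha_L}-f_{\alpha_1}$. This is the same mechanism that gives invariance of \eqref{invmeas} on the torus, where the telescoping closes up.

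\textbf{Boundary sites.} Site 1 in state $\beta$ contributes
\[
\sum_\alpha\Bigl(b^-_{\beta,\alpha}\frac{\rho_\alpha}{\rho_\beta}-b^-_{\alpha,\beta}\Bigr),
\]
because the product measure ratio for replacing $\alpha$ by $\beta$ at one site is $\rho_\alpha/\rho_\beta$. Site $L$ gives the analogous sum with $b^+$. Altogether
\[
F(\balpha)=A^-(\alpha_1)-f_{\alpha_1}+A^+(\alpha_L)+f_{\alpha_L},
\]
where $A^\pm(\beta)$ are the sums above and $f_0=0$.

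\textbf{Separation of variables.} $F$ depends only on $(\alpha_1,\alpha_L)$, and these range independently over $\{0,\dots,n\}$ when $L\ge2$. So $F\equiv0$ if and only if $\phi(\beta):=A^-(\beta)-f_\beta$ and $\psi(\beta):=A^+(\beta)+f_\beta$ satisfy $\phi(\beta)+\psi(\gamma)=0$ for all $\beta,\gamma$. This holds exactly when $\phi(\beta)=\phi(0)$, $\psi(\beta)=\psi(0)$ for all $\beta$, and $\phi(0)+\psi(0)=0$.

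\textbf{Matching with \eqref{boundarymanifold}.} We have
\[
\phi(0)=A^-(0)=\sum_\alpha\bigl(b^-_{0,\alpha}\rho_\alpha/\rho_0-b^-_{\alpha,0}\bigr).
\]
The condition $\phi(\beta)-\phi(0)=0$ is precisely the first line of \eqref{boundarymanifold}. The same holds for $\psi$ and the second line, and $\phi(0)+\psi(0)=0$ is the third line up to sign. This gives both directions of the equivalence.

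The main obstacle is bookkeeping rather than any single hard step. It lies in getting the direction conventions right for $b_{\alpha\beta}$ (the transition $\beta\to\alpha$), which fixes which ratio $\rho_\alpha/\rho_\beta$ appears, and in the sign of the telescoped bulk term at each end. We would verify both on the single-species case $n=1$ against the known ASEP product-measure condition.
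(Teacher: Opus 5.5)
Your proposal is correct. Its computation of $\pi^{-1}(\mathcal{L}^{open})^{*}\pi$ is the paper's. The paper's constants are $A^\pm_\beta=A^\pm(\beta)-A^\pm(0)$ and $B^\pm=A^\pm(0)$, so your $A^-(\alpha_1)$ equals the paper's $\sum_{\beta\ge1}A^-_\beta n^\beta_1+B^-$, and the bulk telescoping is the same. The sufficiency direction therefore coincides with the paper's.

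The two arguments differ in the necessity direction. The paper does not reuse the identity there. It argues instead that invariance forces, through the boundary continuity equations \eqref{cepm}, $\exval{j^\alpha_\pm}$ to equal the bulk current $j^\alpha=\rho_\alpha(f_\alpha-U)$ evaluated in the product measure, where $U=\sum_\beta f_\beta\rho_\beta$. It then identifies \eqref{boundarymanifold} with these current equalities. You get both directions from the single identity $F(\balpha)=\phi(\alpha_1)+\psi(\alpha_L)$, using that $\alpha_1$ and $\alpha_L$ vary independently when $L\ge 2$.

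What your route buys:
\begin{itemize}
\item It is more self-contained: it needs neither a separate current computation nor the step, asserted but not spelled out in the paper, that the manifold is equivalent to the current equalities.
\item It shows exactly where $L\ge2$ is used.
\item It derives full support from $\rme^{\mu_\alpha}>0$ rather than from ergodicity, as the paper does.
\end{itemize}
The paper's route buys the physical reading of the manifold as a matching of boundary and bulk currents, which underlies the PDE-friendly interpretation.

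You can add one refinement to your separation step. Since $\sum_{\beta=0}^n\rho_\beta A^\pm(\beta)=0$ holds identically, constancy of $\phi$ forces $\phi(0)=-U$, and constancy of $\psi$ forces $\psi(0)=U$. Hence the third line of \eqref{boundarymanifold} is automatically implied by the first two.
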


\Proof 
We first prove that the conditions \eqref{boundarymanifold} are sufficient
for invariance of the product measure \eqref{invmeas}.

Since we are on a finite state space, invariance of a probability measure
$\pi$ under the action of a generator $\mathcal{L}$ can be proved by 
verifying  $(\mathcal{L}^\ast \pi) (\balpha) = 0$ with the adjoint generator \cite{Ligg10}, or, equivalently, 
\begin{equation}
(\pi^{-1} \mathcal{L}^\ast \pi) (\balpha) = 0 \quad \forall \balpha \in \Omega
\label{invprop}
\end{equation}
since the process is ergodic and therefore $\pi(\balpha) \neq 0$
for all $\balpha\in\Omega$.
The adjoint $\mathcal{L}^{open\ast}$ is obtained from
$\mathcal{L}^{open}$ by interchanging the indices $\alpha\leftrightarrow\beta$ in the transition rates. \\

Straightforward
computation yields due to the factorization
property of the measure the boundary relations
\begin{eqnarray}
(\pi^{-1} \mathcal{L}^{-\ast}_{1} \pi) (\balpha) 
& = & \sum_{\beta=1}^n A^{-}_\beta n^{\beta}_{1} + B^{-} \\
(\pi^{-1} \mathcal{L}^{+\ast}_{L} \pi) (\balpha) 
& = & \sum_{\beta=1}^n A^{+}_\beta n^{\beta}_{L} + B^{+}
\end{eqnarray}
with the constants
\begin{eqnarray}
A_\beta^{\pm} & = & \sum_{\alpha=0}^n 
\left(b_{\beta,\alpha}^{\pm} \frac{\rho_\alpha}{\rho_\beta}  - b_{\alpha,\beta}^{\pm} 
+ b_{\alpha,0}^{\pm} 
- b_{0,\alpha}^{\pm} \frac{\rho_\alpha}{\rho_0}
\right) \\
B^{\pm} & = & \sum_{\alpha=0}^n \left(b_{0,\alpha}^{\pm} 
\frac{\rho_\alpha}{\rho_0} - b_{\alpha,0}^{\pm} \right) 
\end{eqnarray}

The symmetry $w_{\alpha\beta} = w_{\beta\alpha}$,
and the antisymmetry  $f_{\alpha\beta} = - f_{\beta\alpha}$ lead to the
bulk relation
\begin{equation}
(\pi^{-1} \mathcal{L}^{ex\ast}_{k,k+1} \pi) (\balpha) =
\sum_{\beta=1}^n f_\beta (n^{\beta}_{k+1} - n^{\beta}_{k} ),
\quad 1 \leq k \leq L-1 
\end{equation}
and the therefore by the telescopic property of the sum
\begin{equation}
(\pi^{-1} \mathcal{L}^{ex\ast} \pi) (\balpha) =
\sum_{\beta=1}^n f_\beta (n^{\beta}_{L} - n^{\beta}_{1} )
\end{equation}
so that
\begin{equation}
(\pi^{-1} \mathcal{L}^{open\ast} \pi) (\balpha) 
= \sum_{\beta=1}^n (A^{-}_\beta - f_\beta) n^{\beta}_{1} 
+ \sum_{\beta=1}^n (A^{+}_\beta + f_\beta) n^{\beta}_{L} 
+ B^{-} + B^{+}
\end{equation}
for all $\balpha \in \Omega$. Since by assumption
the boundary parameters  are on the manifold \eqref{boundarymanifold},
the r.h.s. of this equation vanishes. This implies by \eqref{invprop} that the product measure \eqref{invmeas} is invariant.

Assume now that the product measure \eqref{invmeas} is invariant.
From \eqref{instcurboundary} one obtains
\begin{eqnarray*}
\exval{j^\alpha_{-}} 
 & = & 
\sum_{\beta=0}^n (b_{\alpha\beta}^{-} \rho_{\beta} - b_{\beta\alpha}^{-} \rho_{\alpha}) \\
\exval{j^\alpha_{+}} 
 & = & -
\sum_{\beta=0}^n 
(b_{\alpha\beta}^{+} \rho_{\beta} - b_{\beta\alpha}^{+} \rho_{\alpha}).
\end{eqnarray*}
With the current-density
relation \eqref{statcurrrel} the manifold \eqref{boundarymanifold} 
can thus be written as
\begin{equation}
\exval{j^\alpha_-} = \exval{j^\alpha_+} = j^\alpha
\end{equation}
for all $\alpha \in \{1,\dots,n\}$. By the discrete continuity equation \eqref{cepm} this is
a necessary condition for invariance of the product measure \eqref{invmeas}.
\qed

For boundary rates as given
above we say that the system is in contact with reservoirs of
densities $\rho_{\alpha}$.
This manifold is called {\it pde-friendly} \cite{Popk04} since 
there is no microscopic density boundary layer, i.e., 
$\exval{n^\alpha_k} = \rho_\alpha$ for all $k\in\{1,\dots,L\}$.

\subsection{Phase diagram of PEP($n$) with open boundaries}

The manifold \eqref{boundarymanifold} where the product measure \eqref{invmeas} is invariant involves the densities $\rho_\alpha$ that
define the measure in such a way that the first constraint in 
\eqref{boundarymanifold} involves only the left boundary rates
while the second constraint involves only the right boundary rates.
Thus one can invert these relations to obtain boundary densities $\rho_\alpha^{-}$
as functions of the boundary rates $b^{-}_{\alpha\beta}$ and 
{\it a priori} different densities $\rho_\alpha^{+}$
as functions of the boundary rates $b^{+}_{\alpha\beta}$.
The third constraint in \eqref{boundarymanifold} imposes the equality $\rho_\alpha^{+}=\rho_\alpha^{-}$ for all $\alpha\in\{1,\dots,n\}$ of the
densities for which a semi-infinite system with only one set of boundary rates would be stationary, given by either $\rho_\alpha^{+}$ or $\rho_\alpha^{-}$.
This means that in a finite system the left and right boundary densities
need to be equal to ensure that product measure \eqref{invmeas} is 
invariant.

Choosing $\rho_\alpha^{+}\neq \rho_\alpha^{-}$ for one or more
species $\alpha$ then defines an open PEP which still has a unique
invariant measure due to ergodicity, but this (unknown) invariant measure 
is not product and, in general, not expected to have a simple explicit structure, even though in some specific two-species particle-exchange models it can be expressed as a matrix product measure, see \cite{Alca98,Isae01,Arit06,Uchi08,Ayye09,Ayye12,Cram15,Cram16a}
for examples. Generally, for finite systems of size $L$ one expects the local density $\exval{n^\alpha_{k}}$ to depend on the coordinate $k$. As in one-species systems \cite{Derr93a,Schu93b,Popk99}
the densities of a multi-species system are expected to approach a bulk value
$\rho_\alpha
:= \lim_{L\to\infty} \exval{n^\alpha_{
\lfloor
x L\rfloor}}$ in the thermodynamic limit $L\to\infty$ that is independent of $x\in(0,1)$ but which depends on
the two sets of boundary densities
$\rho_\alpha^{+}$ and $\rho_\alpha^{-}$ in a way that is discontinuous
at manifolds of first-order boundary induced phase transitions and
continuous (but with discontinuous derivatives w.r.t. at least one of the densities) at manifolds of second-order boundary induced phase transitions
\cite{Popk04a,Popk11,Cant24}.

It is the purpose of this section to derive for the PEP($n$)
the bulk densities $\rho_\alpha$
of this unknown invariant measure as a function of the two sets of 
the microscopic boundary densities
$\rho_\alpha^{+}$ and $\rho_\alpha^{-}$ and hence, via the first two
constraints in
\eqref{boundarymanifold} as a function of the
boundary rates that define the process on the microscopic scale. To this end, we follow
the approach of \cite{Cant24,zahra2025hydrodynamics} relying on the hydrodynamic limit
for an infinite system studied in the previous section and then determine the steady-state selection
in terms of the Riemann invariants of the hydrodynamic equation.

Due to the pde-friendliness of the boundary rates
this result yields the bulk densities of the
true invariant measure as an explicit function of the {\it microscopic}
boundary rates
rather than as functions only of the {\it macroscopic} boundary densities
of the stationary solution of the hydrodynamic equation
which in the generic approach of \cite{Cant24} are unknown functions of the microscopic boundary rates. Thus, the stationary phase diagram of the
PEP($n$) is fully and explicitly given by the parameters that define the
microscopic model.

Denote the left and right boundary Riemann variables as:
\[
{\bf z}^-=(z_1^-,\dots,z_n^-),\qquad
{\bf z}^+=(z_1^+,\dots,z_n^+)
\]
Following the principle of \cite{Cant24}, we assume that
the bulk state of the open system is selected by the boundary Riemann
problem, namely
\[
{\bf z}^B={\bf z}(0),
\]
where ${\bf z}(\xi)$ is the self-similar entropy solution of the Riemann problem
with left state ${\bf z}^-$ and right state ${\bf z}^+$.

Throughout this section we work in the nondegenerate case
\[
f_0<f_1<\dots<f_n,
\qquad
\mathcal D_z=\prod_{k=1}^n (f_{k-1},f_k),
\]
so that
\[
{\bf z}^-,{\bf z}^+\in\mathcal D_z.
\]

\subsection{Steady state selection}

Recall that the Riemann solution is obtained by concatenating the $n$
elementary waves in increasing family order. Along the $k$-th wave, the
variables $z_1,\dots,z_{k-1}$ have already taken their right values,
whereas $z_{k+1},\dots,z_n$ still equal their left values. Hence the
$k$-th wave is governed by the scalar conservation law
\[
\partial_t q+\partial_x\bigl(q^2+C_k q\bigr)=0,
\qquad q=z_k,
\]
with
\[
C_k:=\sum_{j<k} z_j^+ + \sum_{j>k} z_j^- - \sum_{\alpha=0}^n f_\alpha.
\]
Equivalently, the characteristic speed along the $k$-th family is
\[
g_k'(q)=2q+C_k.
\]
This yields the following explicit selection rule for the bulk state.

Let ${\bf z}^-,{\bf z}^+\in\mathcal D_z$, and let \({\bf z}^B={\bf z}(0)\) be the value at $\xi=0$ of
the self-similar entropy solution of the Riemann problem with left state
${\bf z}^-$ and right state ${\bf z}^+$. Then, for each $k=1,\dots,n$,
\begin{equation}
z_k^B =
\begin{cases}
\min\!\left\{ z_k^+,\, \max\!\left\{ z_k^-,\, -\dfrac{C_k}{2} \right\} \right\},
& z_k^- < z_k^+, \\[3mm]
z_k^-,
& z_k^- > z_k^+ \ \text{and}\ z_k^- + z_k^+ + C_k > 0, \\[1mm]
z_k^+,
& z_k^- > z_k^+ \ \text{and}\ z_k^- + z_k^+ + C_k < 0.
\label{thm:open-bulk-selection}
\end{cases}
\end{equation}
If
\[
z_k^->z_k^+,
\qquad
z_k^-+z_k^+ + C_k=0,
\]
then the $k$-th shock is stationary, and one lies on a coexistence
hypersurface.

%
To derive this steady state selection principle, notice that
the $k$-th elementary wave is the scalar entropy solution of
\[
\partial_t q+\partial_x\bigl(q^2+C_k q\bigr)=0,
\qquad q=z_k,
\]
with left state $q=z_k^-$ and right state $q=z_k^+$. Its characteristic
speed is
\[
g_k'(q)=2q+C_k.
\]

If $z_k^-<z_k^+$, the solution is a rarefaction. Evaluating the
self-similar profile at $\xi=0$ gives
\[
z_k^B=
\begin{cases}
z_k^-, & -\dfrac{C_k}{2}<z_k^-,\\[2mm]
-\dfrac{C_k}{2}, & z_k^-\le -\dfrac{C_k}{2}\le z_k^+,\\[3mm]
z_k^+, & -\dfrac{C_k}{2}>z_k^+,
\end{cases}
\]
which is exactly
\[
z_k^B=\min\!\left\{ z_k^+,\, \max\!\left\{ z_k^-,\, -\dfrac{C_k}{2} \right\} \right\}.
\]

If $z_k^->z_k^+$, the solution is a shock with Rankine--Hugoniot speed
\[
s_k=\frac{(z_k^-)^2+C_k z_k^--\bigl((z_k^+)^2+C_k z_k^+\bigr)}{z_k^- - z_k^+}
=z_k^-+z_k^+ + C_k.
\]
Hence
\[
z_k^B=
\begin{cases}
z_k^-, & s_k>0,\\[1mm]
z_k^+, & s_k<0.
\end{cases}
\]
If $s_k=0$, the shock is stationary, yielding the coexistence
hypersurface.

The steady state selection principle \eqref{thm:open-bulk-selection} shows that each Riemann variable
$z_k$ can belong to one of the three phases:
\begin{itemize}
    \item \textbf{Left-Induced Phase  $\rm L^{(k)}$:} If \( \lambda_k({\bf z^B}) > 0 \), the value of \( z_k \) is determined by the left boundary, i.e., \( z_k^B = z_k^{-} \). This is the analogue of the low-density phase in single-species systems with
concave current-density relation.

    \item \textbf{Right-Induced Phase $ \rm R^{(k)}$: } If \( \lambda_k({\bf z^B})  < 0 \), the value of \( z_k \) is determined by the right boundary, i.e., \( z_k^B = z_k^{+} \). 
 This is the analogue of the high-density phase in single-species systems.
    \item \textbf{Bulk-Induced Phase $ \rm B^{(k)}$:} If \( \lambda_k({\bf z^B})  = 0 \), the value of \( z_k \) is not influenced by either boundary but instead lies on the hypersurface defined by \( \lambda_k({\bf z^B}) = 0 \). In other words, this happens if the $k$-th wave is a rarefaction
crossing the origin, so that
$z_k^B=-\frac{C_k}{2}$.
This is the analogue of the maximal-current phase in single-species systems.
\end{itemize}

\subsubsection*{Global phase diagram}

Because the elementary waves are strictly ordered, at most one family can
contribute a rarefaction that expands on both sides of the origin. Therefore at most one
component of~${\bf z^B}$ can be bulk-induced. This yields $2n+1$ generic phases:
$n+1$ {\it boundary-induced} phases, in which every component is selected
from one of the two reservoirs, and $n$ {\it mixed} phases, in which
exactly one component is bulk-induced.

We denote by~$\mathcal{P}_q$, for $0\le q\le n$, the boundary-induced
phase in which the first~$q$ components are induced from the right and the
remaining~$n-q$ from the left:
\[
\mathcal{P}_q
:=
\bigl(\mathrm{R}^{(1)},\dots,\mathrm{R}^{(q)},\,
      \mathrm{L}^{(q+1)},\dots,\mathrm{L}^{(n)}\bigr),
\qquad 0\le q\le n.
\]
In particular, $\mathcal{P}_0$ is the phase in which all components are
induced from the left, and $\mathcal{P}_n$ the one in which all are
induced from the right.

We denote by~$\mathcal{M}_p$, for $1\le p\le n$, the mixed phase in which
the first~$p-1$ components are induced from the right, the $p$-th
component is bulk-induced, and the remaining~$n-p$ are induced from the
left:
\[
\mathcal{M}_p
:=
\bigl(\mathrm{R}^{(1)},\dots,\mathrm{R}^{(p-1)},\,
      \mathrm{B}^{(p)},\,
      \mathrm{L}^{(p+1)},\dots,\mathrm{L}^{(N)}\bigr),
\qquad 1\le p\le n.
\]

The $2n+1$ phases are naturally ordered as
\[
\mathcal{P}_0,\quad
\mathcal{M}_1,\quad
\mathcal{P}_1,\quad
\dots,\quad
\mathcal{M}_n,\quad
\mathcal{P}_n,
\]
reflecting the left-to-right ordering of the wave families: as one moves
through parameter space, successive waves cross the origin one at a time.

The mixed phase~$\mathcal{M}_p$ is sandwiched between~$\mathcal{P}_{p-1}$
and~$\mathcal{P}_p$. The transition
$\mathcal{P}_{p-1}\leftrightarrow\mathcal{M}_p$ occurs when the left edge
of the $p$-th rarefaction crosses the origin ($2z_p^-+C_p=0$), while
$\mathcal{M}_p\leftrightarrow\mathcal{P}_p$ occurs when the right edge
does ($2z_p^++C_p=0$). When $z_p^->z_p^+$, the $p$-th wave is a shock and
the mixed phase~$\mathcal{M}_p$ is absent; there is instead a direct
coexistence transition $\mathcal{P}_{p-1}\leftrightarrow\mathcal{P}_p$ at
$z_p^-+z_p^++C_p=0$.

Away from these hypersurfaces, the space
$\mathcal{D}\times\mathcal{D}$ is partitioned into the $2n+1$ generic
phases listed above.

The phase structure becomes particularly transparent when viewed
family by family. Once the Riemann variables of all other families
are fixed, the constant $C_k$ is determined, and the selection rule~\ref{thm:open-bulk-selection}
expresses $z_k^B$ as a function of only the two boundary values
$(z_k^-, z_k^+)$. This reduces the $k$-th family to an effective
single-species problem whose phase diagram is shown in
Figure~\ref{fig:diagzk}. The dependence on the other
families enters only through the location of the triple point
$(-C_k/2, -C_k/2)$, around which the three phase regions
$\mathrm{L}^{(k)}$, $\mathrm{B}^{(k)}$, $\mathrm{R}^{(k)}$ are
organized. The full phase diagram of the $n$-species system is
obtained by superimposing these $n$ scalar pictures, one for each
family.

\begin{figure}[h!]
    \centering
  \begin{tikzpicture}[scale=1.2]
  \def\a{1}      
  \def\b{6}      
  \def\c{3}      
  \draw[very thick, fill=white!95!blue] (\a,\a) rectangle (\b,\b);
  \draw[thick, ->] (-0.3,0) -- (6.8,0) node[below] {$z_k^{-}$};
  \draw[thick, ->] (0,-0.3) -- (0,6.8) node[left]  {$z_k^{+}$};
  \node[below left] at (0,0) {$0$};
  \draw (\a,0.08) -- (\a,-0.08) node[below] {$f_{k-1}$};
  \draw (\b,0.08) -- (\b,-0.08) node[below] {$f_{k}$};
  \draw (0.08,\a) -- (-0.08,\a) node[left]  {$f_{k-1}$};
  \draw (0.08,\b) -- (-0.08,\b) node[left]  {$f_{k}$};
  \draw[dotted] (\a,0) -- (\a,\a);
  \draw[dotted] (0,\a) -- (\a,\a);
  \draw[dotted] (\b,0) -- (\b,\b);
  \draw[dotted] (0,\b) -- (\b,\b);
  \draw[dotted] (\c,0) -- (\c,\c);
  \draw[dotted] (0,\c) -- (\c,\c);
  \draw[dashed, thick] (\a,\a) -- (\b,\b);
  \draw[line width=2pt] (\c,\c) -- (\c,\b);
  \draw[line width=2pt] (\a,\c) -- (\c,\c);
  \pgfmathsetmacro{\xend}{min(\b, 2*\c-\a)}
  \pgfmathsetmacro{\yend}{max(\a, 2*\c-\b)}
  \draw[line width=2.2pt, brown] (\c,\c) -- (\xend,\yend);
  \fill (\c,\c) circle (1.5pt);
  \draw (\c,0.08) -- (\c,-0.08) node[below] {$-\tfrac{C_k}{2}$};
  \draw (0.08,\c) -- (-0.08,\c) node[left]  {$-\tfrac{C_k}{2}$};
  \node at (4,5.2) {$\mathrm{L}^{(k)}$};   
  \node at (1.9,4.5) {$\mathrm{B}^{(k)}$};   
  \node at (1.8,2.5) {$\mathrm{R}^{(k)}$};   
  \node at (4.7,3.2) {$\mathrm{L}^{(k)}$};   
  \node at (3,2) {$\mathrm{R}^{(k)}$};   
\end{tikzpicture}
    \caption{Phase diagram for $z_k^B$ in the $(z_k^-, z_k^+)$-plane
at fixed $C_k$, with domain $(f_{k-1}, f_k)^2$. The dashed diagonal separates rarefactions (above) from shocks (below). Solid black lines are the second-order transitions bounding the bulk-induced phase $\mathrm{B}^{(k)}$, and the brown line is the first-order coexistence line where the $k$-th shock is stationary. The dependence on the other Riemann variables enters only through the position of the triple point $(-C_k/2,\,-C_k/2)$ where the three phase boundaries meet.} 
    \label{fig:diagzk}
\end{figure}

As an example, consider $n=1$: the system reduces to a single-species
exclusion process, and the three phases
$\mathcal{P}_0$, $\mathcal{M}_1$, $\mathcal{P}_1$
correspond to the familiar low-density, maximal-current, and high-density
phases, respectively. The first nontrivial case is the two-component
system $n=2$, which we address in the next section.

\subsection{Example: 2-PEP with open boundaries}

The case $n=2$ is the first genuinely multicomponent example. The
corresponding hydrodynamic system can be mapped to a Leroux system \cite{Toth03} 
where the hydrodynamic limit before the emergence of the shocks was proven rigorously. Here we present it in the PEP
parametrization.
To simplify the notations, we fix
\[
 f_0=0,\qquad f_1=\alpha,\qquad f_2=1,
 \qquad 0<\alpha<1.
\]
This is the natural two-species representative of the nondegenerate
regime treated above. The Riemann variables satisfy
\[
0<z_1<\alpha<z_2<1,
\]
and the inverse map is explicitly given by
\[
\rho_0=\frac{z_1z_2}{\alpha},\qquad
\rho_1=\frac{(\alpha-z_1)(z_2-\alpha)}{\alpha(1-\alpha)},\qquad
\rho_2=\frac{(1-z_1)(1-z_2)}{1-\alpha}.
\]
Moreover, the characteristic speeds read
\[
\lambda_1(z_1,z_2)=2z_1+z_2-(1+\alpha),
\qquad
\lambda_2(z_1,z_2)=z_1+2z_2-(1+\alpha),
\]
so that
\[
\lambda_1(z_1,z_2)<\lambda_2(z_1,z_2)
\qquad\text{for all }(z_1,z_2)\in(0,\alpha)\times(\alpha,1).
\]
Hence only five of the nine a priori possible combinations of
left-boundary-selection, right-boundary-selection and bulk-selection can
occur. These are listed in Table~\ref{tab:pep2-phases}.

\begin{table}[h!]
\centering
\begin{tabular}{c||c|c|c}
 & $\lambda_1<0$ & $\lambda_1=0$ & $\lambda_1>0$ \\\hline\hline
$\lambda_2<0$ & $\mathcal P_2$ & $\times$ & $\times$ \\\hline
$\lambda_2=0$ & $\mathcal M_2$ & $\times$ & $\times$ \\\hline
$\lambda_2>0$ & $\mathcal P_1$ & $\mathcal M_1$ & $\mathcal P_0$
\end{tabular}
\caption{Admissible stationary phases of the open two-species PEP. The
strict ordering $\lambda_1<\lambda_2$ excludes the four missing sign
combinations.}
\label{tab:pep2-phases}
\end{table}

The bulk phase portrait is particularly transparent in the $(z_1,z_2)$
variables. The curves
\[
\lambda_1(z_1,z_2)=0
\qquad\text{and}\qquad
\lambda_2(z_1,z_2)=0
\]
separate the three boundary-selected regions $\mathcal P_0$,
$\mathcal P_1$, $\mathcal P_2$, while the mixed phases
$\mathcal M_1$ and $\mathcal M_2$ reduce to these two curves. In the
physical density domain, obtained through the inverse map above, the same
structure is represented by two smooth curves; see
Figure~\ref{fig:pep2-bulk-phases}. In this representation the mixed
phases are one-dimensional sets, not two-dimensional regions.

\begin{figure}[h!]
    \centering
\begin{tikzpicture}[scale = 0.8]
\draw [very thick]
    	[fill=white!95!blue] (0,0)--(4,0) node [below] {$\alpha$}--(4,6)--(0,6)node [left] {$1$} -- cycle;
			\draw[thick, ->] (-0.7,0) -- (5,0) node[below]
   {$z_{1}$};
			\draw[thick, ->] (0,-0.7) -- (0,7) node[left] {$z_{2}$};
   \draw (0.5,0) node[below left] {$0$};
   \draw (0,0.2) node[left] {$\alpha$};

   \draw[very thick, blue] (2,6) -- (2.7,4.6);
   \draw (3,4) node {\color{blue}$\mathcal{M}_1$};
   \draw[very thick, blue] (3.3,3.4) -- (4,2);

   \draw[very thick, red] (0,3) -- (1.6,2.2);
   \draw (2,2) node {\color{red}$\mathcal{M}_2$};
   \draw[very thick, red] (2.4,1.8) -- (4,1);

   \draw (3.3,5.2) node {$\mathcal{P}_0$};
   \draw (1.2,3.2) node {$\mathcal{P}_1$};
   \draw (1.5,0.7) node {$\mathcal{P}_2$};
\end{tikzpicture}
\begin{tikzpicture}[scale = 0.8]
\draw [very thick]
    	[fill=white!95!blue] (0,0)--(6,0) node [below] {$1$}--(0,6)node [left] {$1$} -- cycle;
			\draw[thick, ->] (-0.7,0) -- (7,0) node[below]
   {$\rho_{1}$};
			\draw[thick, ->] (0,-0.7) -- (0,7) node[left] {$\rho_{2}$};
   \draw (0.5,0) node[below left] {$0$};
   \draw (0,0.3) node[left] {$0$};

   \draw (0.6,4.2) node {$\mathcal{P}_2$};
   \draw (2.7,1.8) node {$\mathcal{P}_1$};
   \draw (0.7,0.6) node {$\mathcal{P}_0$};

  \draw[very thick, blue, scale = 6]
    plot[domain=0.2:0.4, samples=300, smooth, variable=\t]
    ({5/3 - 7.5*\t + (25/3)*\t*\t},
     {-2/3 + 4*\t - (10/3)*\t*\t});
  \draw (2.2,1.0) node {\color{blue}$\mathcal{M}_1$};

  \draw[very thick, red, scale = 6]
    plot[domain=0:0.4, samples=400, smooth, variable=\t]
    ({0.5 - (25/12)*\t + (25/12)*\t*\t},
     {0.5 + (1/3)*\t - (5/6)*\t*\t});
  \draw (1.5,3.5) node {\color{red}$\mathcal{M}_2$};
\end{tikzpicture}
\caption{Bulk phase portrait for the open two-species PEP, illustrated
for $\alpha=0.4$. Left: the physical rectangle
$0<z_1<\alpha<z_2<1$ in Riemann variables. Right: the corresponding
physical density triangle. The three boundary-selected phases occupy open
regions, while the mixed phases collapse to the curves
$\lambda_1=0$ and $\lambda_2=0$.}
\label{fig:pep2-bulk-phases}
\end{figure}

The boundary phase diagram is obtained by evaluating the selection rule of
Theorem~\ref{thm:open-bulk-selection} family by family. For $n=2$ one has
\[
C_1=z_2^--(1+\alpha),
\qquad
C_2=z_1^+-(1+\alpha).
\]
Hence the five phases are
\[
\mathcal P_0:\quad (z_1^B,z_2^B)=(z_1^-,z_2^-),
\]
\[
\mathcal M_1:\quad (z_1^B,z_2^B)=\left(\frac{1+\alpha-z_2^-}{2},\,z_2^-\right),
\]
\[
\mathcal P_1:\quad (z_1^B,z_2^B)=(z_1^+,z_2^-),
\]
\[
\mathcal M_2:\quad (z_1^B,z_2^B)=\left(z_1^+,\,\frac{1+\alpha-z_1^+}{2}\right),
\]
\[
\mathcal P_2:\quad (z_1^B,z_2^B)=(z_1^+,z_2^+).
\]

This can be represented by two planar slices of the four-dimensional
boundary space. In the $(z_1^-,z_1^+)$-plane, for fixed $z_2^-$, the
second-order transition lines are
\[
z_1^- = \frac{1+\alpha-z_2^-}{2},
\qquad
z_1^+ = \frac{1+\alpha-z_2^-}{2},
\]
while the coexistence line is
\[
z_1^-+z_1^+ + z_2^- -(1+\alpha)=0.
\]
Similarly, in the $(z_2^-,z_2^+)$-plane, for fixed $z_1^+$, the
second-order transition lines are
\[
z_2^- = \frac{1+\alpha-z_1^+}{2},
\qquad
z_2^+ = \frac{1+\alpha-z_1^+}{2},
\]
while the coexistence line is
\[
z_2^-+z_2^+ + z_1^+ -(1+\alpha)=0.
\]

This example illustrates the general mechanism in the simplest genuinely
multicomponent setting. The strict ordering $\lambda_1<\lambda_2$ rules
out the phase combinations in which the second family would be induced
from the left while the first one is either bulk-selected or induced from
the right. As a consequence, the two-species PEP exhibits exactly the five
phases predicted by the general theory.

\section{Conclusions}
\label{sec:conclusion}
In this work, we have shown that the $n$-species particle exchange process
provides a tractable example of a genuinely nonlinear multicomponent
hydrodynamic theory. The product-measure structure of the microscopic
model leads on large scales to hydrodynamic equations, viz., coupled inviscid Burgers equations, whose characteristic structure
can be analyzed globally and explicitly in terms of Riemann invariants, for an arbitrary number of species. This is a
strong property: for generic systems of three or more conservation laws,
the existence of a complete set of Riemann invariants is exceptional, since
generically the corresponding compatibility conditions are overdetermined
\cite{Lax73,Lax00,Serr00,Serre}.
From this point of view, the PEP hydrodynamics may be regarded as a
natural multicomponent analogue of Burgers hydrodynamics. The scalar ASEP
case is recovered when there is only one conserved density
\cite{Gart88,Kipn99}, while for general $n$ the elementary waves retain a
Burgers-type character in the appropriate Riemann variables. Thus the
model combines two features that are rarely simultaneously available in
systems with several conservation laws: a microscopic exclusion process
with factorized invariant measures, and a macroscopic system whose Riemann
problem can be solved explicitly.

The same structure is particularly useful in the presence of open
boundaries. It allows one to implement, in a fully explicit setting, the
hydrodynamic selection mechanism developed for boundary-driven
multispecies systems \cite{Popk04,Cant24,zahra2025hydrodynamics}. In
contrast with the scalar case, where the stationary bulk density can
be characterized by the extremal-current principle
\cite{Popk99}, the multicomponent bulk state is selected by the
full characteristic structure of the hyperbolic system
 in terms of the macrosopic reservoir densities. Remarkably, for the PEP this
selection can be expressed directly
in terms of the microscopic boundary rates on the
pde-friendly manifold. It is also interesting to note that the scalar
extremal-current principle can thus be understood as a special case of the 
entropy solution
 of the associated Riemann problem. 

The resulting open-boundary phase diagram provides, to our knowledge, one
of the few explicit examples of a boundary-driven exclusion process with
an arbitrary number of components beyond the two-species case. Other
examples include \cite{Ayye17,Roy21}; however, the present construction
is fundamentally different in that it does not require knowledge of the invariant measure
for open boundaries which is generally difficult to obtain. Instead, it is 
derived directly from the hydrodynamic Riemann
problem and applies uniformly to the full $n$-component PEP family.

Several directions remain open. 
First, we did not address the question of universality of the density profile
in the bulk-induced phase which in generic one-component systems approaches the bulk value with a universal power law and a universal
amplitude \cite{Hage01}. This result follows from nonlinear fluctuating
hydrodynamics arguments which we expect to remain valid for the PEP($n$)
with non-degenerate drift parameters studied in the present work. It is unclear, however, what universal features of the density profile
to expect in generic open particle systems with more than one bulk conservation law.

Second,  the degenerate case in which several drift parameters $f_\alpha$
coincide deserves further investigation. We have shown that the
corresponding species combine into block densities and that the remaining
internal compositions are transported through contact modes. The two-species
case with open boundaries, i.e., when $f_1=0$ and $f_2\neq 0$
%
was discussed in detail in \cite{Rako04}. The time-evolution of the density of one species is autonomous and microscopically identical to that of the single-species ASEP which is governed on Eulerian scale by the inviscid Burgers equation. Thus
in this special case the stationary phase diagram of the open PEP(2) can be deduced directly
from the known stationary phase diagram of the single-species ASEP
obtained in \cite{Derr93a,Schu93b}.

When $f_1=f_2=0$ 
the PEP(2) is expected to be non-trivial under diffusive scaling
where it is governed by a 2-species diffusion equation. Specifically, for $w_{10}=w_{20}=w_{21}$ the PEP(2) is equivalent to the self-dual
two-species symmetric simple exclusion process with colored particles \cite{Schu94,Giar09,Sa25}
which does not exhibit boundary-induced phase transitions.
The stationary density profiles $\exval{n^1_k}$ and $\exval{n^2_k}$ 
are linear in $k$ and one finds universal properties of the correlations
\cite{Cari13,Casi24} similar to
those found in the single-species simple symmetric exclusion process 
\cite{Spoh83,Derr93a,Flor23}
and in the boundary-driven XXX quantum Heisenberg chain \cite{Pros11,Popk13,Buca16}. 

For $w_{10}\neq w_{20}$ and $w_{21}=0$
the process is equivalent to the symmetric simple exclusion process with
tagged particles \cite{Arra83,Quas92} which is also a model with experimental relevance for single-file diffusion   \cite{Kukl96,Wei00} and the
dynamics of entangled polymers 
\cite{Perk94,Bark96,Schu99}.
A boundary-induced phase transition was shown to occur \cite{Brza07}
by appealing to hydrodynamic arguments employed rigorously for the infinite
system \cite{Quas92}. This diffusive boundary-induced
phase transition derives microscopically from the subdiffusive single-file behaviour of the tagged particles \cite{Arra83} which can be seen as ``defect particles'' that block the motion of the first particle species. It is thus
of a different nature than those in driven systems and
not within the scope of the present work on steady-state selection in bulk-driven systems.

More generally, the
appearance of these contact waves suggests a relation with defect-like or
passive degrees of freedom inside the hydrodynamic theory. This viewpoint
is reminiscent of recent work on a single impurity in TASEP, where a
distinguished particle with modified microscopic rates can have a
nontrivial macroscopic effect on the surrounding density field
\cite{CantZah25}. Understanding whether the contact modes arising from
degenerate drift parameters can be interpreted in a similar way could
clarify the role of degeneracies in multispecies exclusion processes.

A related problem concerns the presence of umbilical points where one or
more characteristic velocities of the macroscopic system of conservation
laws coincide. An example is the two-lane model in \cite{Popk12} which exhibits unusual shock waves and which has recently again attracted attention in the study stationary density fluctuations \cite{Roy25,Spoh26}. In such a scenario the 
Riemann approach of the present work fails and the problem
of steady state selection remains open.



\begin{thebibliography}{99}

%
\bibitem{Alca98}
F. C. Alcaraz, S. Dasmahapatra, and V. Rittenberg,
N-species stochastic models with boundaries and quadratic algebras,
J. Phys. A: Math. Gen. \textbf{31,} 845--878 (1998).

\bibitem{Arit06} 
Arita, C. 
Phase transitions in the two-species totally asymmetric exclusion process with open boundaries. 
J. Stat. Mech. Theory Exp. P12008, (2006).

\bibitem{Arra83} R. Arratia, The motion of a tagged particle in the simple symmetric exclusion system in Z,Ann. Probab. \textbf{11}, 362--373 (1983)

\bibitem{Ayye09}  
Ayyer, A., Lebowitz, J. L. and Speer, E. R. 
On the two species asymmetric exclusion process with semi-permeable boundaries. 
J. Stat. Phys. 135, 1009–1037 (2009).

\bibitem{Ayye12}   
Ayyer, A., Lebowitz, J. L. and Speer, E. R. 
On some classes of open two-species exclusion processes. 
Markov Process. Related Fields 18, 157–176 (2012).

\bibitem{Ayye17}
A. Ayyer and D. Roy,
The exact phase diagram for a class of open multispecies asymmetric exclusion processes,
Scientific Reports \textbf{7},  13555 (2017) .

\bibitem{Baha12}
C. Bahadoran,
Hydrodynamics and Hydrostatics for a Class of 
Asymmetric Particle Systems with Open Boundaries,
Commun. Math. Phys. {\bf 310}, 1 -- 24 (2012).

%

\bibitem{Bark96} G.T. Barkema, C. Caron, and J.F. Marko,  Scaling properties of gel electrophoresis of DNA, Biopolymers \textbf{38}, 665--667 (1996).

\bibitem{Brza07}
A. Brzank and G.M. Sch\"utz,
Phase Transition in the two-component symmetric exclusion process 
with open boundaries, 
J. Stat. Mech. \textbf{2007}, P08028 (2007).

\bibitem{Buca16} Bu\v{c}a, B. and Prosen, T.: B. Bu\v{c}a and T. Prosen, Connected correlations, fluctuations and current of magnetization in the steady state of boundary driven XXZ spin chains. J. Stat. Mech. \textbf{2016}, 023102 (2016).


\bibitem{Cant16} 
Cantini, L., Garbali, A., de Gier, J. and Wheeler, M. 
Koornwinder polynomials and the stationary multi-species asymmetric exclusion process with open boundaries. 
J. Phys. A 49, 444002 (2016).

\bibitem{Cant22}
L. Cantini and A. Zahra, 
Hydrodynamic behavior of the two-TASEP,
J. Phys. A: Math. Theor. \textbf{55} 305201 (2022).

\bibitem{Cant24}
L. Cantini and A. Zahra,
Steady-state selection in multi-species driven diffusive systems, 
EPL \textbf{146}, 21006 (2024)

\bibitem{CantZah25}
L. Cantini and A. Zahra,
Single impurity in the totally asymmetric simple exclusion process,
J. Stat. Mech. \textbf{2025}, 043204 (2025).


\bibitem{Cari13} G. Carinci, C. Giardin\`a, C. Giberti, and F. Redig, Duality for Stochastic Models of Transport, J. Stat. Phys. \textbf{152}, 657--697 (2013).

\bibitem{Casi24} F. Casini, R. Frassek, and C. Giardin\`a, Duality for the multispecies stirring process with open boundaries, J. Phys. A: Math. Theor. \textbf{57}, 295001 (2024)

\bibitem{Chen95} 
G.-Q. Chen and C.M. Dafermos, 
The vanishing viscosity method in one-dimensional thermoelasticity, 
Trans. Amer. Math. Soc. \textbf{347}, 531–541 (1995)

\bibitem{Chow24}
D. Chowdhury, A. Schadschneider, K. Nishinari,
Physics of collective transport and traffic phenomena in biology: Progress in 20 years,
Physics of Life Reviews \textbf{51},
409--422 (2024).

\bibitem{Cram15}
Crampe, N., Mallick, K., Ragoucy, E. and Vanicat, M. 
Open two-species exclusion processes with integrable boundaries. 
J. Phys. A 48, 175002, (2015).

\bibitem{Cram16a} 
Crampe, N., Evans, M., Mallick, K., Ragoucy, E. and Vanicat, M. 
Matrix product solution to a 2-species tasep with open integrable boundaries. arXiv preprint arXiv:1606.08148 (2016).

\bibitem{Cram16b} 
Crampe, N., Finn, C., Ragoucy, E. and Vanicat, M. 
Integrable boundary conditions for multi-species asep. 
J. Phys. A 49, 375201, (2016).


\bibitem{Derr93a}
B.~Derrida, M.R.~Evans, V.~Hakim, and V.~Pasquier,
Exact solution of a 1D asymmetric exclusion model using a matrix formulation,
J. Phys. A: Math. Gen. \textbf{26}, 1493--1517 (1993).

\bibitem{Derr93b}
Derrida, B., Janowsky, S.A., Lebowitz, J.L., Speer, E.R.: 
Exact solution of the totally asymmetric simple exclusion process: Shock profiles, 
J. Stat. Phys. \textbf{73}, 813--842 (1993).

%


\bibitem{Dubrovin}
Dubrovin BA, Novikov SP. Hydrodynamics of weakly deformed soliton lattices. Differential geometry and Hamiltonian theory. Russian Mathematical Surveys. 1989 Dec 31;44(6):35-124.

\bibitem{Evan98b} 
M.R. Evans,  Y. Kafri, H.M. Koduvely, and D. Mukamel,
Phase separation and coarsening in one-dimensional driven diffusive 
systems: Local dynamics leading to long-range Hamiltonians,
Phys. Rev. E {\bf 58} 2764--2778 (1998).

\bibitem{Ferr94a} 
P.A. Ferrari and L.R.G. Fontes, 
Shock fluctuations in the asymmetric simple exclusion process.
Probab. Theory Relat. Fields \textbf{99}, 305--319 (1994).

\bibitem{Ferr13}
P.L. Ferrari, T. Sasamoto and H. Spohn,
Coupled Kardar-Parisi-Zhang equations in one dimension, 
J. Stat. Phys.  \textbf{153}, 377--399 (2013).

\bibitem{Flor23} S. Floreani, A.G. Casanova Non-equilibrium steady state of the symmetric exclusion process with reservoirs, arXiv:2307.02481 [math.PR] (2023).

\bibitem{Gart88}
G\"artner J.: 
Convergence towards Burgers’ equation and propagation of chaos for weakly asymmetric exclusion processes. 
Stoch. Proc. Appl. \textbf{27}, 233--260 (1988).

\bibitem{Giar09} C. Giardin\`a, J. Kurchan, F. Redig,  and K. Vafayi,
Giardin\`a, C., Kurchan, J., Redig, F., and Vafayi, K.: Duality and Hidden Symmetries in Interacting Particle Systems, J. Stat. Phys. \textbf{135}, 25--55 (2009).

\bibitem{Gris11}
R. Grisi and G.M. Sch\"utz,
Current symmetries for particle systems with several conservation laws,
J. Stat. Phys. \textbf{145}, 1499--1512 (2011).

\bibitem{GrosSpo03}
S. Gro{\ss}kinsky and H. Spohn,
Stationary measures and hydrodynamics of zero range processes with several
species of particles,
Bull. Braz. Math. Soc. \textbf{34}, 489--507 (2003).

\bibitem{Hage01} 
J.S. Hager J. Krug, V. Popkov, and G.M. Sch\"utz,
Minimal current phase and universal boundary layers in driven 
diffusive systems
Phys. Rev. E \textbf{63}, 056110 (2001).

\bibitem{Isae01}
A. P. Isaev, P. N. Pyatov and V. Rittenberg,
Diffusion algebras,
J. Phys. A: Math. Gen. \textbf{34}, 5815--5834 (2001).


\bibitem{Kipn99}
Kipnis, C., Landim, C.:
Scaling limits of interacting particle systems.
Springer, Berlin (1999).

\bibitem{Krug91}
J Krug, 
Boundary-induced phase transitions in driven diffusive systems. 
Phys. Rev. Lett. \textbf{67}, 1882--1885 (1991).

\bibitem{Kukl96} Kukla, V., Kornatowski, J., Demuth, D., Girnus, I., Pfeifer, H., Rees, L.V.C., Schunk, S., Unger, K. and K\"arger, J.:  NMR studies of single-file diffusion in unidimensional channel zeolites. Science \textbf{272} 702--704 (1996).

\bibitem{Lax73}
P. D. Lax, Hyperbolic Systems of Conservation Laws and
the Mathematical Theory of Shock Waves, Vol. 11 (SIAM,
Philadelphia, 1973).

\bibitem{Lax00}
P. D. Lax, Hyperbolic Partial Differential Equations, Courant
Lecture Notes in Mathematics, Vol. 14 (AMS/Courant Institute
of Mathematical Sciences, New York, 2000).


\bibitem{Ligg99} 
Liggett, T.M.:
Stochastic Interacting Systems: Contact, Voter and Exclusion Processes.
Springer, Berlin (1999).

\bibitem{Ligg10} 
T.M.~Liggett, 
Continuous Time Markov Processes: An Introduction,
Graduate Studies in Mathematics Vol. 113,
American Mathematical Society, Rhode Island (2010).

\bibitem{MacD68} 
MacDonald J.T., Gibbs J.H., and Pipkin A.C.:
Kinetics of biopolymerization on nucleic acid templates.
Biopolymers \textbf{6}, 1--25 (1968).


\bibitem{Perk94}  T.T. Perkins, D.E. Smith, and S. Chu, Direct observation of tube-like motion of a single polymer-chain, Science {\bf 264}, 819--822 (1994).

\bibitem{Popk99} 
Popkov V and Sch\"utz G M 
Steady-state selection in driven diffusive systems with open 
boundaries,
Europhys. Lett. \textbf{48}, 257--264 (1999)

\bibitem{Popk04a}
V. Popkov,
Infinite reflections of shock fronts in driven diffusive systems with two species
J. Phys. A: Math. Gen. \textbf{37}, 1545--1557 (2004).

\bibitem{Popk04} 
V. Popkov and G.M. Sch\"utz,
Why spontaneous symmetry breaking disappears in a bridge system with PDE-friendly boundaries,
 J. Stat. Mech. \textbf{2004}, P12004 (2004).

\bibitem{Popk11}
V. Popkov and M. Salerno, 
Hierarchy of boundary-driven phase transitions in multispecies particle systems,
Phys. Rev. E \textbf{83}, 011130 (2011).

\bibitem{Popk12}
V. Popkov and G. M. Schütz,
Unusual shock wave in two-species driven systems with an umbilic point,
Phys. Rev. E \textbf{86}, 031139 (2012)

\bibitem{Popk13} V. Popkov, D. Karevski, and G. M. Schütz,  Driven isotropic Heisenberg spin chain with arbitrary boundary twisting angle: exact results,  Phys. Rev. E \textbf{88}, 062118 (2013).

%
\bibitem{Pros11} T. Prosen, Exact Nonequilibrium Steady State of a Strongly Driven Open XXZ Chain. Phys. Rev. Lett. \textbf{107}, 137201 (2011).

\bibitem{Quas92} J. Quastel, Diffusion of color in the simple exclusion process, Commun. Pure Appl. Math. \textbf{45}, 623--679 (1992).


\bibitem{Rako04} 
A. R\'akos and G.M. Sch\"utz,
Exact shock measures and steady state selection in a driven
diffusive system with two conserved densities,
J. Stat. Phys. \textbf{117}, 55-76 (2004).


\bibitem{Reza91}
F. Rezakhanlou,
Hydrodynamic limit for attractive particle systems on $\mathbb Z^d$,
Commun. Math. Phys. \textbf{140}, 417--448 (1991).

\bibitem{Roy21}
D. Roy,
The phase diagram for a class of multispecies permissive asymmetric exclusion processes,
J. Stat. Mech. \textbf{2021}, 013201 (2021)

\bibitem{Roy25}
D. Roy, A. Dhar, M. Kulkarni and H. Spohn,
Fixed points and universality classes in coupled Kardar–Parisi–Zhang equations, 
J. Stat. Mech. \textbf{2025},  073209 (2025)

\bibitem{Sand94b}
S. Sandow, 
Partially asymmetric exclusion process with open boundaries,
Phys. Rev. E \textbf{50}, 2660--2667 (1994). 

\bibitem{Schu93b}
G. Sch\"{u}tz and E. Domany, 
Phase transitions in an exactly soluble one- dimensional asymmetric exclusion model. 
J. Stat. Phys. \textbf{72}, 277--296 (1993).

\bibitem{Schu23a} G.M. Sch\"utz,   A reverse duality for the ASEP with open boundaries,  J. Phys. A: Math. Theor. \textbf{56}, 274001 (2023).

\bibitem{Sa25} L. S\'a, P. Ribeiro, T. Prosen and D. Bernard, Symmetry Classes of Classical Stochastic Processes, J. Stat.  Phys. \textbf{192}, 41 (2025).

\bibitem{Schu94} G. Sch\"utz, and S. Sandow,


\bibitem{Schu99} G.M. Sch\"utz, Non-equilibrium relaxation law for entangled polymers, Europhys. Lett. {\bf 48}, 623--628 (1999).

\bibitem{Schu03} 
G.M. Sch\"utz,
Critical phenomena and universal dynamics in one-dimensional
driven diffusive systems with two species of particles,
J. Phys. A \textbf{36}, R339 - R379 (2003).

\bibitem{Schu17}
G.M. Sch\"utz and B. Wehefritz–Kaufmann,
Kardar-Parisi-Zhang modes in d-dimensional directed polymers,
 Physical Review E \textbf{96}, 032119 (2017).

\bibitem{Serre}
Serre D. Systems of Conservation Laws 2: Geometric Structures, Oscillations, and Initial-Boundary Value Problems. Cambridge University Press; 1999.

\bibitem{Serr00}
D. Serre, Systems of Conservation Laws. Cambridge University Press, Cambridge (2000)

\bibitem{Spit70}
F. Spitzer, 
Interaction of Markov processes,
Adv. Math. \textbf{5}, 246--290, (1970)

\bibitem{Spoh83}
H. Spohn, Long-range correlations for stochastic lattice gases in a non-equilibrium steady state. J. Phys. A: Math. Gen. \textbf{16}, 4275--4291 (1983).

%
\bibitem{Spoh26}
H. Spohn, 
The Popkov-Schütz two-lane lattice gas: universality for general jump rates,
J. Stat. Mech. \textbf{2026}, 023203 (2026).

\bibitem{Toth03}
B. T\'oth and B. Valk\'o,
Onsager relations and Eulerian hydrodynamic limit for systems with several conservation laws, 
J. Stat. Phys. \textbf{112}, 497--521 (2003).

\bibitem{Tsarev}
Tsarev SP. The geometry of Hamiltonian systems of hydrodynamic type. The generalized hodograph method. Mathematics of the USSR-Izvestiya. 1991 Apr 30;37(2):397-419.

\bibitem{Uchi08} 
Uchiyama, M. 
Two-species asymmetric simple exclusion process with open boundaries. 
Chaos Solitons Fractals 35, 398--407,  (2008).

\bibitem{Wei00} Wei, Q.-H., Bechinger, C. and Leiderer P.:  Single-File diffusion of colloids in one-dimensional channels. Science \textbf{287}, 625--627 (2000).



\bibitem{Yau91}
H.-T. Yau,
Relative entropy and hydrodynamics of Ginzburg--Landau models,
Lett. Math. Phys. \textbf{22}, 63--80 (1991).

\bibitem{zahra2025hydrodynamics}
A. Zahra, Hydrodynamics of multi-species driven diffusive systems with open boundaries: a two-tasep study,
\textit{arXiv preprint} arXiv:2512.10108, 2025.

\bibitem{ZahraMulti}
Zahra A. Multi-species generalization of the totally asymmetric simple exclusion process. arXiv preprint arXiv:2301.04066. 2023.





\end{thebibliography}
\end{document}